%
%
%
%
%

%
\RequirePackage{fix-cm}
\documentclass[smallextended]{svjour3}       
\smartqed  
\usepackage{graphicx}
\usepackage{csquotes}
\usepackage[table]{xcolor}
\usepackage[authoryear]{natbib}
\usepackage{amsmath}
\usepackage{amssymb}
\usepackage{enumitem}
\usepackage{bbm}

%
%
%
%
%
\begin{document}

\title{Kripke Semantics of the Perfectly Transparent Equilibrium}


\author{Ghislain Fourny}


\institute{G. Fourny \at
              ETH Z\"urich \\
              Department of Computer Science \\
              \email{ghislain.fourny@inf.ethz.ch}\\
}

\date{July 19, 2018}

\maketitle

\begin{abstract}
The Perfectly Transparent Equilibrium is algorithmically defined, for any game in normal form with perfect information and no ties, as the iterated deletion of non-individually-rational strategy profiles until at most one remains. It is always Pareto optimal and thus provides a simple way out of social dilemmas. In this paper, we characterize the Perfectly Transparent Equilibrium with adapted Kripke models having necessary rationality, necessary knowledge of strategies -- necessary epistemic (factual) omniscience -- as well as eventual logical omniscience.

Eventual logical omniscience is introduced as a weaker version of perfect logical omniscience, with logical omniscience being quantized and fading away counterfactually. It is the price to pay for necessary factual omniscience and necessary rationality: we conjecture that epistemic omniscience, logical omniscience and necessary rationality form an impossibility triangle.

We consider multimodal classes of Kripke structures, with respect to agents, but also in the sense that we have both epistemic and logical accessibility relations. Knowledge is defined in terms of the former, while necessity is defined in terms of the latter. Lewisian closest-state functions, which are not restricted to unilateral deviations, model counterfactuals.

We use impossible possible worlds \`a la Rantala to model that some strategy profiles cannot possibly be reached in some situations. Eventual logical omniscience is then bootstrapped with the agents' considering that, at logically possible, but non-normal worlds \`a la Kripke, any world is logically accessible and thus any deviation of strategy is possible. As in known in literature, under rationality and knowledge of strategies, these worlds characterize individual rationality. Then, in normal worlds, higher levels of logical omniscience characterize higher levels of individual rationality, and a high-enough level of logical omniscience characterizes, when it exists, the Perfectly Transparent Equilibrium.

\keywords{Counterfactual dependency, Perfect Prediction, Transparency, Non-Cooperative Game Theory, Non-Nashian Game Theory, Symmetric Games, Superrationality}
\end{abstract}

\section{Introduction: perfect prediction and epistemic omniscience}

In classical game theory and, more generally, the neoclassical school of economics, we typically build rationality on the top of an assumption that agents can unilaterally change their strategies. With this mindset, \citet{Stalnaker1994} applied \citet{Kripke1963}'s work to game theory and showed that rationalizability is characterized with common belief in rationality, and that Nash equilibria are characterized with rationality and knowledge of of the opponent's beliefs about one's strategy.

Allowing only for unilateral deviations, however, leaves us with the insatisfaction of accounting for -- predicting or describing -- agent's decisions, while simultaneously assuming that they are unpredictable. This leads to tricky situations in which reasonings involve states in which some agents are not rational, or for games in extensive form, situations in which reasoning is made at nodes that are not actually reached if all agents are rational. This is widely discussed in literature, for example by \citet{Aumann1995}, \citet{Binmore1996}, \citet{Binmore1997}, \citet{Stalnaker1998}, \citet{Halpern2001}.

\citet{Dupuy1992} showed that understanding the prediction structure in Newcomb's problem was directly connected to the Prisoner's dilemma. In \citep{Dupuy2000}, he suggested that an alternate account of rationality could be used to suggest new solution concepts in which agents perfectly predict each other's strategies in all possible worlds, and are rational in all possible worlds. This was formalized by \citet{Fourny2018} as the Perfect Prediction Equilibrium for games in extensive form, and later extended \citep{Fourny2017} to games in normal form as the Perfectly Transparent Equilibrium (PTE).

\citet{Halpern:2013aa} characterized a new construct (individually rational miminax-rationalizable outcomes) with Common Counterfactual Belief of Rationality (CCBR), which is a weaker form of our assumption of rationality in all possible worlds (necessary rationality). As it turns out, the PTE often coincides with Halpern's and Pass's solution concept, but some counterexamples are known. Other papers in literature investigate alternate solution concepts that are more transparent than Nash equilibria \citep{Bilo2011} \citep{Shiffrin2009} \citep{Tennenholtz2004}. Their relationship with the PTE is discussed in \citep{Fourny2017}.

This paper aims at providing a characterization in Kripke semantics of the PTE. As a consequence, this provides a clear understanding of where Perfect Prediction and CCBR differ.

We are going to make the outrageously unrealistic assumption that players are epistemically omniscient\footnote{This is an intended hint to \citep{Stalnaker1994}, who makes this statement about logical omniscience}\footnote{In spite of seeming unrealistic, this assumption is very relevant, as Dupuy explains that Perfect Prediction does not need to hold: it suffices that players \emph{believe} that they are all perfect predictors to reach the outcome characterized by Perfect Prediction. Indeed, the commonly predicted outcome is dictated by the laws of logics, and the laws of logics alone, as beautifully explained by \citet{Hofstadter1983}.}. There is in particular knowledge of strategies in all possible worlds, known as necessary knowledge of rationality\footnote{This is a stronger assumption than common knowledge of strategies}. In addition, we assume rationality in all possible worlds, known as necessary rationality\footnote{This is a stronger assumption than common knowledge of rationality}.

As all strategies are known by all agents in all possible worlds, the price to pay for this epistemic omniscience is a weakening of logical omniscience. We consider Kripke models arranged in layers of increasing degrees of logical omniscience, in such a way that logical omniscience gradually degrades counterfactually. Non-normal worlds \citep{Kripke1965} in which anything is logically possible are eventually reached after a finite number of nested counterfactual deviations, which bootstraps what we call eventual logical omniscience and supports formally the iterated deletion of non-individually-rational strategy profiles underlying the Perfectly Transparent Equilibrium.

The remainder of this paper is organized as follows. Section  \ref{section-background} recalls the basic concepts of the PTE, including games in normal form. Section \ref{section-kripke-framework} introduces our adapted Kripke framework to describe our assumption of necessary rationality, necessary knowledge of strategies and eventual logical omniscience in terms of possible worlds and accessibility relations. Section \ref{section-further-tools} introduces classical Kripke semantics concepts such as knowledge and necessity, as well as our conception of rationality under epistemic omniscience, and eventual logical omniscience. Section \ref{section-modal-logic} introduces the syntax for our modal logic constructs (rationality, knowledge of strategies, necessity, etc). Section \ref{section-characterization} states the equivalence between Kripke models with necessary rationality, necessary knowledge of strategy and eventual logical omniscience and the PTE. Section \ref{section-conclusion} wraps up our findings and summarizes differences with CCBR.

\section{Background: games in normal form and the Perfectly Transparent Equilibrium}
\label{section-background}

The Perfectly Transparent Equilibrium adapts \citet{Dupuy2000}'s framework of Perfect Prediction to games in normal forms: strategic games. We are interested in settings with perfect information in that players all know the details of the game. We consider pure strategies in the sense that players do not assemble mixed strategies with weights, and we consider games with no random moves (no dice rolls) and no ties\footnote{also called ``in general position'' in literature.}.

In this section, we summarize the most important definitions around the PTE. More details as well as numerous examples can be found in \citep{Fourny2017}.

\subsection{Games in normal form}

A game in normal form can be described with a matrix mapping each possible combination of the players' choices of strategy, called a strategy profile, to the payoffs they receive.

\begin{definition}[Game in normal form]

A game in normal form $\Gamma=(P, (\Sigma_i)_i, (u_i)_i)$ is defined with:

\begin{itemize}
\item a finite set of players $P$.
\item a set of strategies $\Sigma_i$ for each player $i\in P$. The set of strategy profiles is denoted $\Sigma=\times_{i\in P} \Sigma_i$.
\item a specification of payoffs $u_i(\overrightarrow\sigma)$ for each player $i\in P$ and strategy profile $\overrightarrow\sigma=(\sigma_j)_{j\in P}$.
\end{itemize}

\end{definition}

We only consider pure strategies, meaning that players may not use randomness to build mixed strategies. The outcome of a game must thus always be one of the strategy profiles of the normal form matrix, with each player getting the corresponding payoff.

Furthermore, in this paper, we assume that there are no ties, meaning that a player always has a strict preference between any two strategy profiles.

Payoffs only have an ordinal meaning, which is why in all our examples we use an increasing sequence of small integers starting with 0.

Finally, the game is only played once: it is important to understand that this is not a repeated-game equilibrium.

\subsection{Preemption}

Algorithmically, the Perfectly Transparent Equilibrium is obtained by iterated elimination of non-individually-rational strategy profiles until at most one remains. In particular, when it exists, it is unique.

\subsubsection{The first round of elimination: individual rationality}

We now recall the computation of the PTE, as it was introduced in \citep{Fourny2017}. The core idea is that it generalizes individual rationality to higher levels that consider the fact that some strategy profiles were eliminated on lower levels and cannot be possibly commonly known as the solution.

Level-1 individual rationality coincides with individual rationality as broadly found in literature, e.g., in the folk theorem characterizing stable equilibria in repeated games, and defines a subset of all strategy profiles that Pareto-dominate the tuple of maximin payoffs.

Level-2 individual rationality re-iterates this same idea, keeping strategy profiles that Pareto-dominate the tuple of maximin payoffs but taking into consideration only strategy profiles from level 1, and so on. Strategy profiles get iteratively deleted as one considers higher levels of individual rationality.

At some point, either one profile remains and is stable, which is the PTE, or all profiles get eliminated, and there is no equilibrium.

The following definition summarizes this process\footnote{For brevity, we merged the definition for level-1 individual rationality with the general definition originally given separately in \cite{Fourny2017} for pedagogical purposes, as the former is a special case of the latter if we properly initialize $\mathcal{S}_0$ to the set of all strategy profiles.}.

\begin{definition}[Level-k individually rational strategy profile]
Given a game in normal form $\Gamma=(P, \Sigma, u)$ with no ties, a strategy profile is level-k preempted, for $k\ge1$, if it does not Pareto-dominate the maximin utility, where the maximin is only taking into account strategy profiles that are not level-$(k-1)$-preempted\footnote{That is, that are level-$(k-1)$ individually rational.}. The recursion is initialized with $\mathcal{S}_0(\Gamma)=\Sigma$, that is, all profile are level-0 individually rational and none are level-0 preempted. The strategy profiles $\overrightarrow\sigma\in \mathcal{S}_k(\Gamma)$ that survive the $k^{th}$ round of elimination are called level-k individually rational, and are characterized with:

$$\mathcal{S}_k(\Gamma) = \{ \overrightarrow\sigma | \forall i \in P, u_i(\overrightarrow\sigma) \ge$$

$$\max_{
\begin{array}{c}
\tau_i\in\Sigma_i
\\
{\scriptstyle \text{s.t.} \exists \tau_{-i}\in\Sigma_{-i}, (\tau_i, \tau_{-i})\in \mathcal{S}_{k-1}(\Gamma)}
\end{array}
}
\quad
\min_{
\begin{array}{c}
\tau_{-i}\in\Sigma_{-i}
\\
{\scriptstyle \text{s.t.} (\tau_i, \tau_{-i})\in \mathcal{S}_{k-1}(\Gamma)}
\end{array}
} u_i(\tau_i, \tau_{-i}) \}$$
\end{definition}

\subsection{Perfectly Transparent Equilibrium}

The PTE is defined as the one strategy profile, if any, that survives the above iterated elimination of strategy profiles, i.e., it is the only one that is level-k individually rational for arbitrarily high k.

\begin{definition}[Perfectly Transparent Equilibrium for games in normal form]
Given a game in normal form with no ties, a Perfectly Transparent Equilibrium is a strategy profile that is level-k individually rational for all $k\in \mathbb{N}$. The set of Perfectly Transparent Equilibria is $\mathcal{S}(\Gamma)=\cap_k \mathcal{S}_k(\Gamma)$.
\end{definition}

The PTE is at most unique. Uniqueness of the Perfectly Transparent Equilibrium is crucial to the justification of Perfect Prediction, as the players have no magic powers. Rather, they predict the outcome of the game thanks to the laws of logic and their awareness that all players are rational, reason perfectly well, and because there is common knowledge thereof in all possible worlds.

Since the laws of logics dictate at most one answer, as explained by \citet{Hofstadter1983}, the fact that they all arrive to the same conclusion is not due to telepathic powers, but to their sole mastery of modal logic.

It is this modal logic that we are now going to give a formal account of, characterizing the Kripke models underlying the Perfectly Transparent Equillibrium.

\section{Kripke framework}
\label{section-kripke-framework}

We now introduce our adapted Kripke models to characterize the PTE in terms of modal logic.

\subsection{The general idea}

\citet{Stalnaker1994} introduced the idea of using \citet{Kripke1963}'s seeding work to characterize and evaluate solution concepts. A solution concept can thus be characterized algorithmically (i.e., for rationalizability, iterated deletion of dominated strategies) as well as epistemically (i.e., common belief in rationality).

The algorithmic characterization of the PTE was given by \citet{Fourny2017} and recalled in Section \ref{section-background}. In this paper, we give a characterization of the PTE in terms of \citet{Kripke1963} semantics. 

In Stalnaker's original work, and in most of the field of classical game theory, only unilateral deviations or strategies are considered, making it a purely epistemic approach as described for example by \citet{Perea2012}. However, Perfect Prediction introduces rationality in all possible worlds as well as common knowledge of strategies in all possible worlds, a machinery that only makes sense by dropping the assumption of unilateral deviations: counterfactuals, in the way that they were introduced by \citet{Lewis1973} are thus needed to express the consequences of a deviation of strategies. For example, in the prisoner's dilemma, we can express statements justifying rational cooperation such as ``If I were to defect, the opponent would defect as well.'' This is implemented with closest state functions.

A related approach introducing counterfactuals after Lewis and dropping the assumption of unilateral deviations was contributed by \cite{Halpern:2013aa} for characterizing Common Counterfactual Belief of Rationality (CCBR). Necessary rationality and necessary knowledge of strategies are stronger and more radical assumptions than CCBR, although we will use similar closest-state functions as in \cite{Halpern:2013aa}. However, unlike Halpern and Pass, we do not have any probability distributions modeling beliefs, as we are working with epistemically omniscient agents (Perfect Predictors).

We use multimodal Kripke models, not only in terms of agents, but also in terms of kind of accessibility relation. We explicitly distinguish between epistemic accessibility to model what worlds agents consider epistemically possible, and logical accessibility to model worlds that agents consider logically possible (possibly counterfactually), and include them both in our Kripke models.

We introduce impossible possible worlds, both in a way similar to \citet{Kripke1965}, with non-normal worlds in which everything is possible and nothing is necessary, and also impossible possible worlds similar to \citep{Rantala1982}, which are logically impossible worlds in which we set truth values manually.

In any possible world, we will have common knowledge of rationality, and common knowledge of strategies\footnote{A strategy is not an event, so that the term ``common knowledge of strategies'' may appear as a surprise. We will formally explain what we mean with ``knowledge of strategies'' by extending the concept of knowledge to any ``world variable.''}. As this is true in all possible worlds, it follows that we have a much more stringent assumption than in Nashian game theory, namely: \emph{necessary rationality\footnote{The closest equivalent to this found in literature is \citep{Halpern:2013aa}, who introduce Common Counterfactual Knowledge of Rationality. Our approach is more stringent, as this also entails that an agent would believe to also have been counterfactually rational (CB*RAT), which is something Halpern and Pass exclude.} and necessary knowledge of strategies}.

The price to pay is that we cannot have full logical omniscience in all possible worlds. We thus introduce the notion of level-k logical omniscience, which can be seen as reasoning steps in a ``Turing machine'' that computes the PTE. The level of logical omniscience decreases as agents navigate conterfactuals, until non-normal Kripkean worlds are reached at level 1. The PTE is characterized with Kripke frames fulfilling the above assumptions on high enough levels of logical omniscience.

From a terminology perspective, we will use the words ``Kripke frame'', which is, simply put, a skeleton graph of worlds, as well as ``Kripke structure'' -- a Kripke frame with a labelling function assigning truth values to logical formulas -- and ``Kripke model'' -- a Kripke structure with an actual world.

Let us start by formally defining our Kripke frames.

\begin{definition}[Kripke frame]
A Kripke frame $M$ appropriate for a strategic game $\Gamma=(P, (\Sigma_i)_{i\in P}, (u_i)_{i\in P})$ is a tuple

$$M=(\Omega, \Lambda, \Xi, \overrightarrow{\sigma}, \mathcal{K}, \mathcal{L}, f)$$ where:

\begin{itemize}
\item $\Omega$ is a set of possible worlds (including impossible possible worlds);
\item $\Lambda \subseteq \Omega$ is the set of all logically possible worlds (which can be normal or not);
\item $\Xi \subseteq \Lambda$ is the set of all normal worlds;
\item $\overrightarrow{\sigma}\in \Omega \to \Sigma$ maps these worlds to strategy profiles in $\Sigma=\times_{i\in P} \Sigma_i$;
\item $\mathcal{K}_i\subseteq\Omega^2$ is the epistemic accessibility relation for each player $i\in P$;
\item $\mathcal{L}_i\subseteq\Omega^2$ is the logical accessibility relation for each player $i\in P$;
\item $f\in \Omega \times P \times \Sigma \to \Omega$ is the Lewisian closest-state function used for counterfactual statements;
\end{itemize}
\end{definition}

The following subsections detail the components of this Kripke frame.

\subsection{Set of all possible worlds}

$\Omega$ is the set of all possible worlds. We will consider individual worlds $w \in \Omega$. These worlds may be logically possible or not, and if they are logically possible, they may be normal or not.

In each possible world, the game is played in some specific way, leading to a specific outcome.

\subsection{Logically possible worlds}

$\Lambda\subseteq\Omega$ is the set of all logically possible worlds. These worlds are consistent under the laws of propositional logic (modus ponens, weak consistency, etc) and propositions are assigned truths as per their semantics. These worlds may still be normal or non-normal (see Section \ref{section-normal}).

We handle logically impossible worlds with Rantala's ``impossible possible worlds'' approach, meaning that, in these worlds, logical statements are assigned truth values individually \citep{Rantala1982}. We follow the fourth approach described in \cite{Halpern2011} in Section 2.5 in order to deal with logical omniscience.

As we will see in Section \ref{section-logical-accessibility}, in any world, the agents may have their own subjective assessments on which worlds are logically possible. $\Lambda$ provides the objective basis for what worlds are really logically possible. All agents agree that any world in $\Lambda$ is logically possible, even though they may believe that further worlds are logically possible as well. This is the case in non-normal Kripkean worlds where everything is logically possible and nothing is logically necessary.

Lack of logical omniscience will be modelled with the notion of level-k logical omniscience. The higher the value of $k$, the more logically omniscient the agents.

\subsection{Normal worlds}
\label{section-normal}

$\Xi\subseteq\Lambda$ is the set of all normal worlds. In these worlds, not only the laws of propositional logic apply, but necessity and possibility are also defined in terms of Kripke semantics based on the logical accessibility relation and objective logical possibility.

We handle non-normal worlds with Kripke's ``impossible possible worlds'' approach \citep{Kripke1965}, in the sense that modal operators are defined with nothing being necessary, and everything being possible, rather than according to classical normal semantics.

Non-normal worlds may also be logically impossible, in which case all truth assignments, not only for the modal operators, are set manually.

A normal world is always logically possible.

\subsection{Epistemic accessibility relation}

We have an epistemic accessibility relation $w \mathcal{K}_i w'$ that expresses that in world $w$, agent $i$ considers world $w'$ epistemically possible, in the sense that $w'$ is compatible with their knowledge and the facts that they are aware of.

This epistemic accessibility relation entails the concept of knowledge: an event true in all epistemically accessible worlds from $w$ is known in $w$.

We assume, as is widespread in game-theoretical settings, that the epistemic accessibility relation is reflexive (and thus serial), transitive and symmetric (and thus euclidian), i.e., it organizes the $\Omega$ space into partitions. In the framework of the PTE and Perfect Prediction, epistemic accessibility relations play a lesser role. Indeed, we are considering a class of Kripke models in which the agents are epistemically omniscient on the topics that interest us: each event relevant to us (rationality, etc), is common knowledge in the worlds in which it happens. As we will see, the assumptions we will make (necessary rationality, necessary knowledge of strategies) are stronger than common knowledge.

In practice, the Kripke frames we will construct will use plain and simple equality as the epistemic accessibility relation, even though this does not have to be the case in wider contexts.

\subsection{Logical accessibility function}
\label{section-logical-accessibility}

We have a logical accessibility relation $w \mathcal{L}_i w'$ that expresses that in world $w$, agent $i$ considers world $w'$ to be logically possible.

This alternate accessibility relation implements the approach contributed by \cite{Halpern2011}, in the sense that an agent has a test function that stamps every world in $\Omega$ as being logically possible from their perspective, or not. This test function, in the world $w$, is implemented as $T(w')=w \mathcal{L}_i w'$.

We are considering logical accessibility relation that are total on $\Lambda$, meaning that $\mathcal{L}_i \cap \Lambda^2 = \Lambda^2$. This condition on logical accessibility, which commonly holds in literature in the absence of non-normal worlds, is thus even stronger than having partitions, as there is only one partition in $\Lambda$: $\Lambda$ itself. Any logically possible world is thus logically accessible from any other logically possible world.

For non-normal worlds, that is, in $\Omega \setminus \Xi$, everything is logically possible. Non-normal worlds are thus connected to all worlds in $\Omega$ by the logical accessibility relation.

In normal worlds, all logically possible worlds ($\Lambda$), and only them, are logically accessible. This is all independent of the agent considered, as they reason perfectly.

Note that, for a world that is non-normal but logically possible, both of the above conditions are mutually compatible, i.e., as a logically possible worlds, it is connected to all logically possible worlds, and as a non-normal world, it is connected as well to all logically impossible worlds.

This approach allows us to deal with (eventual) logical omniscience; indeed, we need to bootstrap the reasoning from a state in which nothing is known about logically impossibility, to states in which the agent knows exactly which worlds are logically possible.

Note that we may have $w \mathcal{L}_i w'$ even if $w \mathcal{K}_i w'$ does not hold. In other words, an agent may believe that a world, even if it is incompatible with their knowledge, is logically possible and would have been (counterfactually) logically possible. This corresponds to the seeding work of \cite{Lewis1973} that defines the semantics on counterfactual implications.

\subsection{Strategy profile in each world}

As is done commonly in literature, also in absence of knowledge of strategies, we associate each world $w\in \Omega$ with the strategy profile reached by the players in this world $\overrightarrow{\sigma}(w)$, i.e., $\overrightarrow{\sigma}$ is a function from $\Omega$ to $\Sigma$. We denote $\sigma_i(w)$ the strategy picked by agent $i$ in world $w$.

We also assign strategy profiles to logically impossible worlds and non-normal worlds.

\subsection{Counterfactuals}

Following \cite{Lewis1973}, we model counterfactual implications with a function $f$ associating a world $w\in \Omega$, a player $i\in P$ and a strategy $s'_i \in \Sigma_i$ with the closest possible world $f(w, i, s'_i)\in\Omega$ in which player $i$ picks strategy $s'_i$.

This function $f$ is referred to as the closest-state function.

Note that this world may be logically impossible, or it may be non-normal. However, it always holds that $\sigma_i(f(w, i, s'_i))=s'_i$.

We slightly modify Lewisian semantics in the sense that a logically impossible world may be closer than a logically possible world. Indeed, worlds may show no logical contradictions for a given deviation of strategy in the presence of an agent with a lesser level of logical omniscience, but an agent with a higher level of logical omniscience may have sufficient reasoning skills to conclude that that same deviation of strategy is logically impossible. In such a case, the closest-state function must return an impossible world to the latter deviating agent.

In other words, the spheres of accessibility underlying $f$ are arranged around a possible world in decreasing levels of logical omniscience starting with the actual world at the center. Given an agent $i$ and deviation of strategy $s'_i$, a world with the highest (but lower or equal to that of the actual world) level of logical omniscience and in which agent $i$ picks $s'_i$ is thus chosen as $f(w, i, s'_i)$, whether it is logically accessible or not\footnote{When $f(w, i, s'_i)$ is not logically accessible, it means that $s'_i$ is preempted, and agent $i$ considers this deviation to be logically impossible at his level of logical omniscience.}. At equal levels, a logically possible worlds is closer than a logically impossible world.

\section{Concepts used in Kripke frames}

In this section, we formally introduce a few concepts such as events, world variables, knowledge, necessity, epistemic omniscience, eventual logical omniscience and rationality. Most of these concepts are commonly found in literature. World variables help us formalize knowledge of strategies cleanly as strategies are not events.

\label{section-further-tools}

\subsection{Events and world variables}

We use the definition of an event (also called proposition) commonly found in literature.

\begin{definition}[Event]
An event $E$ is a set of possible worlds:
$$E\subset\Omega$$
\end{definition}

We also introduce the definition of a world variable, having the strategy profile world variable in mind. This definition is similar to that of random variables in probability theory.

\begin{definition}[World variable]
A world variable $X$ is a function that maps each world to a target set $S$:

$$X\subset \Omega \to S$$
\end{definition}

An example of world variable is $\overrightarrow{\sigma}$, which maps each world to a strategy profile.

\subsection{Knowledge}

Knowledge\footnote{Our approach is epistemic rather than doxastic, because we assume reflexivity of the epistemic accessibility relation.} of an event follows the common approach in Kripke semantics, based on the epistemic accessibility relation:

\begin{definition}[Knowledge of an event]
Agent $i\in P$ knows event $E\subset \Omega$ in world $w\in \Omega$ if

$$\forall w' \in \Omega, w\mathcal{K}_i w' \implies w'\in E$$

\end{definition}

We extend this definition of knowledge to knowledge of a world variable like so:

\begin{definition}[Knowledge of a world variable]
A world variable $X \in \Omega \to V$ is known to an agent $i\in P$ in a world $w\in \Omega$ if, in all epistemically accessible worlds from $w$, $X(w)$ has the same value:

$$\forall w' \in \Omega, w\mathcal{K}_i w' \implies X(w') = X(w)$$

\end{definition}

Note that knowledge of an event in a world can seen as the conjunction of the knowledge of its indicator function with the event happening in that world.

In the following sections, we will in particular use knowledge of rationality (an event) and knowledge of strategies (a world variable).

We do not need to introduce common knowledge in our framework, as our assumptions, formulated in terms of necessity, are stronger than common knowledge or even common counterfactual belief.

\subsection{Necessity}

Necessity is to the logical accessibility relation what knowledge is to the epistemic accessibility relation.

\begin{definition}[Necessity of an event]
In a normal world $w\in\Xi$, necessity of an event $E$ holds for agent $i$ if that event includes all logically accessible worlds.

$$\forall w' \in \Omega, w\mathcal{L}_i w' \implies w'\in E$$

In a non-normal world $w\notin\Xi$, no event is ever necessary.
\end{definition}

Note that, because logical worlds are always logically accessible, necessity of an event in any world entails that it is a superset of the $\Gamma$ set of logically possible worlds. Since, in the class of frames considered, epistemic accessibility implies logical accessibility and the latter is total on $\Lambda$, then a necessary event is en event that happens in all worlds in $\Lambda$, and is thus commonly known in all worlds in $\Lambda$. This fact is independent on the agent considered.

Also, note that the necessity of an event is never itself necessary in the presence of non-normal worlds ($\Lambda \setminus \Xi \neq \emptyset$), which is the case in our models\footnote{We thus do not have $\Box\Box RAT$ nor $\Box\Box K(\overrightarrow \sigma)$}.

In particular, we will use necessary rationality and necessary knowledge of strategies\footnote{We could also use a weaker definition of necessity that propagates via the Lewisian closest-state function, similar to Halpern's and Pass's CCBR: an event $E$ is necessary in a world $w$ for agent $i$ if it contains all worlds logically accessible by $i$ from $w$, and, recursively, if would also have been necessary in case of a logically possible deviation of strategies. A necessary event then does not cover the entire $\Lambda$ space, but only the transitive closure of the state graph induced by $(\mathcal{L}_i)_i$ and $f$. We believe that the results in this paper would still hold. However, the simple definition used here, is a more faithful account of Dupuy's projected time and Perfect Prediction, as discussed offline.}.

\subsection{Epistemic omniscience}

We assume epistemic omniscience for everything relevant to us. Epistemic omniscience applies in particular if $\mathcal{K}_i$ is the equality relation for all agents.

Epistemic omniscience is meant, here, as factual omniscience, i.e., it subsumes deductive omniscience\footnote{$KA\wedge K(A\Rightarrow B) \implies K(B)$}, weak logical omniscience\footnote{Truths of propositional logic are known} and strong logical omniscience\footnote{Truths of modal logic are known}: any proposition that holds is known to the agent.

First, we assume that strategies are always known in all possible worlds (this entails necessary knowledge of strategies in all normal worlds).

$$\forall i \in P, \forall w, w' \in \Omega, w \mathcal{K}_i w'  \implies \overrightarrow{\sigma}(w)=\overrightarrow{\sigma}(w')$$

Next, we assume that the logical accessibility of any world (or the lack thereof) is known in all possible worlds (which entails necessary knowledge of logical accessibility in all normal worlds):

$$\forall i \in P, \forall w, w', w'' \in \Omega, w \mathcal{K}_i w' \wedge w\mathcal{L}_i w'' \implies w' \mathcal{L}_i w''$$

Thus, we will omit the universal quantifier on epistemically possible worlds and use the actual world on the left-hand-side of $\mathcal{L}_i$. The condition above is actually entailed by the construction of $(\mathcal{L}_i)_i$.

\bigskip

Thirdly, we assume that the closest-state function, for any agent and deviation of strategy, sends all worlds in a given epistemic partition to worlds in a unique (same or other) epistemic partition, i.e., the consequences of any deviation of strategies are known.

$$\forall i, j \in P, \forall s'_j \in \Sigma_j, \forall w, w'\in \Omega, w\mathcal{K}_i w' \implies f(w, j, s'_j) \mathcal{K}_i f(w', j, s'_j)$$

As a consequence, we can write $\overrightarrow{\sigma}(f(w, i, s'_i))$ to represent the profile that would be reached if agent $i$ deviates to $s'_i$ without worrying about universal quantifiers on the epistemic partition.

Finally, the logical possibility of a deviation of strategies is known as well:

$$\forall i, j, k \in P, \forall s'_j \in \Sigma_j, \forall w, w'\in \Omega, w\mathcal{K}_i w' \wedge  w\mathcal{L}_k f(w, j, s'_j) \implies w\mathcal{L}_k f(w', j, s'_j)$$

We thus do not have to use a universal quantifier on epistemically accessible worlds to making the statement that a deviation is possible ($w\mathcal{L}_i f(w, i, s'_i)$). This holds if partitions of $(\mathcal{K}_i)_i$ do not cross the boundaries of $\Lambda$ and $\Xi$.

All in all, the consequence of epistemic omniscience is that the epistemic accessibility relation will appear rarely in our statements.

\subsection{Eventual logical omniscience}

We now introduce the tools characterizing levels of logical omniscience. Eventual logical omniscience, suggested here, provides a way to model the emergence of logical omniscience as agents progress in their logical reasonings.

We are looking at a class of Kripke models in which possible worlds are organized in layers representing various levels of logical omniscience. At the bottommost level (0), all worlds are logically impossible \citep{Rantala1982}. At level 1, a logically possible world is non-normal, meaning that nothing is logically necessary, and everything is logically possible \citep{Kripke1965}. At any upper level of logical omniscience, a logically possible world is normal, meaning that agents know which worlds are logically possible, and recursively know that a deviation of strategy would (counterfactually) lead to a world with one less level of logical omniscience, i.e., logical omniscience degrades counterfactually.

Note that, for any level, we may have logically impossible worlds in which the proposition of level-k logical omniscience is set manually as being true.

The actual world of a model will then be taken at a sufficiently high level of logical omniscience, and in particular, will be normal (level at least 2). However, we do not have full (counterfactual) logical omniscience in the sense that arbitrary counterfactual deviations would stay in normal worlds. In particular, rationality is necessary, although it is not necessary that rationality is necessary. We call this slightly weaker account for logical omniscience ``eventual logical omniscience''\footnote{This is a direct hint to the notion of eventual consistency in the database world.}.

Non-normal worlds \citep{Kripke1965} allow us to manage logical omniscience by bootstrapping the logical reasoning that leads to the outcomes that are actually reachable in logically possible worlds. In normal worlds, an agent's assessment of logically possible possible worlds matches the actually logically possible possible worlds. 

\begin{definition}[Level-k logical omniscience]
A logically possible world satisfies level-1 logical omniscience if it is non-normal ($\Lambda \setminus \Xi$).

For $k \ge 2$, a logically possible world $w\in \Lambda$ satisfies level-k logical omniscience if:
\begin{itemize}
\item it is normal ($w\in\Xi$), and
\item any deviation of strategies by any agent leads to a world (logically impossible\footnote{The truth assignment of the level is then set manually.} or not) satisfying level-$(k-1)$ logical omniscience.
\item there exists a world $w'$ satisfying level-$(k-1)$ logical omniscience, in which the same strategy profile is reached ($\overrightarrow{\sigma}(w)=\overrightarrow{\sigma}(w')$) and that is logically accessible from $w$ for all agents ($w\mathcal{L}_i w'$).\footnote{Since $k\ge 2$ and w is normal, this comes down to assuming that there is a logically possible world (in $\Lambda$ as well) with one less degree of logical omniscience and in which the same strategy profile is reached. This can be seen as a monotony property, i.e., if it is known that a strategy profile is logically impossible at level k of logical omniscience, then it is also known to be logically impossible at any higher level.}

For logically impossible worlds, the level of logical omniscience is assigned manually \citep{Rantala1982}.
\end{itemize}

The last condition says that the logical possibility of a strategy profile must propagate downward through logical omniscience levels, i.e., if a strategy profile was proven logically impossible at some level of logical omniscience, then it must also be impossible at higher levels of logical omniscience as well. In other words, the set of possible strategy profiles compatible with necessary rationality and necessary knowledge of strategies must decrease when logical omniscience increases.

\end{definition}

In particular, logical omniscience on or above level 2 in a logically possible world entails its normality.

\subsection{Rationality}

Rationality is defined as utility maximization. In this paper, rationality is assumed in the context of necessary knowledge of strategies.

Since we are assuming necessary knowledge of strategies, in the sense that strategies are known in all possible worlds, this allows us to define rationality in a very simple way. Indeed, logically accessibility is known, and the outcome of the Lewis closest state function is known as well for any agent and deviation of strategy.

\begin{definition}[Rationality]
A player is rational in a logically possible world $w\in \Lambda$ if no logically possible change of strategy would guarantee a higher payoff to them.

$$\forall s'_{i} \in \Sigma_i, w\mathcal{L}_if(w, i, {s'}_i) \implies u_i(\overrightarrow{\sigma}(w)) \ge  u_i(\overrightarrow{s}(f(w, i, {s'}_i)))$$

Note that the above definition  entails that a player is also rational if they believe that any other choice of strategy by them is logically impossible\footnote{meaning that it counterfactually leads to a logically impossible world}. This is actually the case for high enough levels of logical omniscience, as at most one logically possible profile remains.

\end{definition}

The truth value of rationality in logically impossible worlds is assigned manually \citep{Rantala1982}. In practice, it is always assigned to true, following \citet{Rantala1982}'s requirement that a proposition true in all logically possible worlds must always be assigned to true in logically impossible worlds\footnote{2.3.iii on page 109.}. More generally, any necessary proposition in normal worlds must be set to true in all logically impossible worlds, and its necessity to false.

\section{Modal logic}
\label{section-modal-logic}

We now extend our Kripke frames to models by adding truth assignments to propositions and sentences.

\subsection{Syntax}
\label{section-syntax}
We extend a Kripke frame $M$ to a model with a labelling function $\models$ that assign truth values to logical sentences, defined recursively as follows, for any world in $\Lambda$.

We define the following atomic formulas, for a world $w\in\Lambda$:

$$w \models \text{true, always}$$
$$w \models RAT \text{ if all agents are rational in w}$$
$$w \models K(\overrightarrow\sigma) \text{ if all agents know one another's strategies}$$
$$w \models OMN^k \text{ if there is level-k logical omniscience, for } k\ge 1$$
$$w \models \text{play}(\overrightarrow{s}) \text{ if the strategy profile picked in } w \text{ is } \overrightarrow{s}$$
$$w \models \text{play}_i(s_i) \text{ if the strategy picked by agent } i \text{ in } w \text{ is } s_i$$

We now introduce the connectives originally defined by \cite{Kripke1963}. 

We introduce the two following propositional logic connectives for any logically possible world $w\in\Lambda$:

$$w \models A \wedge B \text{ if } w \models A \text{ and } w \models B$$
$$w \models \neg A \text{ if } (M, w) \not \models A $$

As well as the derived connectives:

$$w \models A \vee B \text{ if } w \models \neg (\neg A \wedge \neg B)$$
$$w \models (A \Rightarrow B) \text{ if } w \models \neg A \vee B$$
$$w \models (A \iff B) \text{ if } w \models (A \Rightarrow B) \wedge (B \Rightarrow A)$$

For any normal world $w\in\Xi$, we also define the modal logic connectives:

$$w \models \Box A \text{ if A is necessary in w}$$

It is naturally extended to::

$$w \models \Diamond A \text{ if } w \models \neg \Box \neg A$$

Finally, we introduce the following operator for convenience\footnote{Note that, because of the way we defined levels of logical omniscience, the logical accessibility relation is the same for all agents in any normal world with level-k logical omniscience for $k\ge 1$: all worlds are accessible if $k=1$, and otherwise all worlds in $\Lambda$.}:

$$w \models \Diamond^c A \text{ if } \exists i,j \in P, s'_i \neq s_i(w), w\mathcal{L}_j f(w, i, s'_i) \wedge f(w, i, s'_i)  \models A$$

We can thus express in particular the logical possibility of a deviation of strategy, which is the logical possibility operator constrained to the range of the Lewis closest-state function)\footnote{$\Diamond^c$ actually implicitly involves a virtual accessibility relation that restricts the logical accessibility relation to one level below of logical omniscience plus the actual world. It is this same virtual accessibility relation that the semantics of our closest-state functions is based on.}:

$$w \models \Diamond^c \text{play}_i(s'_i) \text{ if } w\mathcal{L}_j f(w, i, s'_i) \text{ for some } j$$

\subsection{Logically impossible worlds}

For logically impossible worlds, we assign each proposional logic statement to its truth value manually \citep{Rantala1982}. For example, we may have for a non-normal world $w \models RAT$ independently of whether the definition of rationality applies. These worlds can be inconsistent regarding classical propositional logic, e.g., we may not have weak consistency. In this paper, the terminology ``logically possible'' refers to propositional logic.

\subsection{Non-normal worlds}

Logically possible non-normal worlds follow the laws of propositional logics, meaning that the truth values of propositions are set according to Section \ref{section-syntax}.

However, in all non-normal worlds (including logically impossible worlds), modal logic connectives are set according to different rules, as introduced by \citet{Kripke1965} in his seeding paper on impossible possible worlds.

$$w \models \Box A \text{ never holds for any event A }$$
$$w \models \Diamond A \text{ always holds for any event A}$$
$$w \models \Diamond^c A \text{ always holds for any event A}$$

In other worlds, in a non-normal world, nothing is necessary, and everything is possible.

\subsection{Full support by a Kripke structure}

We now characterize the Kripke structures that truly cover all possible agents' choices. This can be seen as a weak form of free will, i.e., the agents can pick any strategy they want and that they believe is logically possible.

Requiring full support is crucial, because unlike in Nashian settings such as rationalizability, or in Halpern's and Pass's minimax-rationalizability, the order of elimination of strategy profiles matters. Kripke models that are missing, for one reason or another, otherwise possible deviations of strategies may thus diverge to a different outcome.

\begin{definition}[Level-k full-support by a Kripke structure]
A Kripke structure $M=(\Omega, \Lambda, \Xi, \overrightarrow{\sigma}, \mathcal{K}, \mathcal{L}, \models)$, appropriate for a game $\Gamma$ and satisfying the assumptions stated previously, is said to have level-0 full support with respect to a logical formula $F$ if $\forall w\in\Omega, w\models F$ and $\forall w\in\Xi, w\models \Box F$ as well as

$$\forall \overrightarrow{s}\in\Sigma, \exists w\in\Lambda, w\models \text{play}(\overrightarrow{s})$$

$M$ has level-1 full support with respect to a logical formula $F$ if it has level-0 full support, and for any Kripke structure $N$  satisfying level-0 full support and logically possible\footnote{and non-normal because of the required antecedent} world $x\in\Lambda_N$,

$$x\models_N F \wedge OMN^1  \implies \exists w\in\Lambda, w\models \text{play}(\overrightarrow{\sigma_N}(x)) \wedge F \wedge OMN^1$$

For $k\ge 2$, $M$ has level-k full support with respect to a logical formula $F$ if it has level-$(k-1)$ full support, and for any Kripke structure $N$  satisfying level-$(k-1)$ full support and world $x\in\Lambda_N$,

$$x\models_N \Box F \wedge OMN^k  \implies \exists w\in\Lambda, w\models \text{play}(\overrightarrow{\sigma_N}(x)) \wedge \Box F \wedge OMN^k$$

A Kripke structure satisfying the above condition for all $k$ is said to have full-support with respect to F.

\end{definition}

We will in particular use the above class with (necessary) rationality and (necessary) knowledge of strategies, i.e., $F=RAT\wedge K(\overrightarrow{\sigma})$.

The condition then says that a full-support Kripke model cannot "miss" a strategy profile realized in a world at some level $k$ of logical omniscience under necessary rationality and necessary knowledge of strategies, if that strategy profile could be obtained in a logically possible world in another Kripke model that has full-support at level $k-1$. It can be seen as some kind of completeness feature\footnote{Not in the sense that completeness has in logic, which is why we are using a different terminology to avoid confusion.}, obtained in increasing levels of logical omniscience.

\section{Characterization of the Perfectly Transparent Equilibrium in Kripke semantics}
\label{section-characterization}
We now state how the PTE and the iterated deletion of preempted strategy profile relate to Kripke frames as defined in the previous sections.

\subsection{Theorems and characterization}

First, we start with the levels of iterated elimination.

At level 1, we characterize individual rationality with logically possible, non-normal worlds with rationality and knowledge of strategies. This result is well known in literature.\footnote{Meaning that individual rationality corresponds in Kripke semantics to rationality and knowledge of strategies. The non-normal aspect, meaning that all deviations of strategies are considered possible, is mentioned explicitly in this paper.}

\begin{theorem}
\label{lemma-epistemic-first-level}
Let $\Gamma=(P, \Sigma, u)$ be a game in normal form.

Then the following is equivalent:

\begin{enumerate}[label=(\alph*)]
\item $\overrightarrow{s} \in \mathcal{S}_1(\Gamma)$, that is, $\overrightarrow{s}$ is individually rational.
\item There exists a Kripke structure $M=(\Omega, \Lambda, \Xi, \overrightarrow{s}, \mathcal{K}, \mathcal{L}, f, \models)$ that has full support with respect to rationality and knowledge of strategies, and a logically possible, non-normal world $w\in\Lambda\setminus\Xi$ such that

$$w\models\text{play}(\overrightarrow{s}) \wedge RAT \wedge K(\overrightarrow{\sigma}) \wedge OMN^1$$
\end{enumerate}

In other words, the level-$k$ individually rational outcomes are exactly the outcomes $o$ for which there exists a full-support Kripke model in which the agents reach this outcome $o$ under rationality, knowledge of strategies and level-1 (non-normal) logical omniscience.

\end{theorem}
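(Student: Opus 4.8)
The plan is to prove the two implications of the equivalence separately; (b)$\Rightarrow$(a) is a short direct computation that does not even use full support, while (a)$\Rightarrow$(b) requires constructing an explicit full-support Kripke structure. For (b)$\Rightarrow$(a), suppose $w\in\Lambda\setminus\Xi$ with $w\models\text{play}(\overrightarrow s)\wedge RAT$. Since $w$ is non-normal, $w\mathcal{L}_i w'$ holds for \emph{every} $w'\in\Omega$, so in particular $w\mathcal{L}_i f(w,i,s'_i)$ holds for every agent $i$ and every $s'_i\in\Sigma_i$; rationality at $w$ then gives $u_i(\overrightarrow s)\ge u_i(\overrightarrow\sigma(f(w,i,s'_i)))$. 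Because the closest-state function always satisfies $\sigma_i(f(w,i,s'_i))=s'_i$, the profile $\overrightarrow\sigma(f(w,i,s'_i))$ is of the form $(s'_i,\tau_{-i})$ and hence has $i$-payoff at least $\min_{\rho_{-i}\in\Sigma_{-i}}u_i(s'_i,\rho_{-i})$. Taking the maximum over $s'_i\in\Sigma_i$ recovers player $i$'s maximin value computed over $\mathcal{S}_0(\Gamma)=\Sigma$; doing this for all $i\in P$ shows that $\overrightarrow s$ Pareto-dominates the maximin tuple, i.e. $\overrightarrow s\in\mathcal{S}_1(\Gamma)$. This is the folklore correspondence the statement alludes to.

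For (a)$\Rightarrow$(b), given $\overrightarrow s\in\mathcal{S}_1(\Gamma)$ I would exhibit one structure $M$ with full support containing a witness world. Take each $\mathcal{K}_i$ to be equality, so that $K(\overrightarrow\sigma)$ and all epistemic-omniscience conditions hold automatically. Populate $\Omega$ with: for each $\overrightarrow u\in\mathcal{S}_1(\Gamma)$ and each level $\ell\ge1$ a copy $w^\ell_{\overrightarrow u}$ playing $\overrightarrow u$, with $w^1_{\overrightarrow u}$ logically possible and non-normal and the $w^\ell_{\overrightarrow u}$ for $\ell\ge2$ normal; for each $\overrightarrow t\in\Sigma\setminus\mathcal{S}_1(\Gamma)$ one normal world $w_{\overrightarrow t}$ playing $\overrightarrow t$ (needed only so that every profile is realised in $\Lambda$, which level-$0$ full support demands); and logically impossible worlds used as counterfactual targets, on which $RAT$, $K(\overrightarrow\sigma)$ and the relevant $OMN^m$ are set to true by hand following Rantala. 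The relations $\mathcal{L}_i$ are then fixed by the frame conventions (all of $\Omega$ accessible from a non-normal world, exactly $\Lambda$ from a normal one). The closest-state function $f$ is chosen so that rationality holds precisely where wanted: from $w^1_{\overrightarrow u}$ a deviation to $s'_i$ is sent to a world playing $(s'_i,\tau^\ast_{-i})$, where $\tau^\ast_{-i}$ minimises $u_i(s'_i,\cdot)$ over $\Sigma_{-i}$ --- a logically possible copy when $(s'_i,\tau^\ast_{-i})\in\mathcal{S}_1(\Gamma)$, an impossible world otherwise (admissible because a non-normal world sees everything as logically possible, in line with the ``sphere'' reading of $f$); from any normal world every non-trivial deviation is routed to a logically impossible world, which, being inaccessible from a normal world, makes the rationality implication vacuous there; and $f(w,i,\sigma_i(w))=w$.

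The verification then splits into three routine parts. First, $RAT\wedge K(\overrightarrow\sigma)$ holds in every world of $\Omega$ --- at $w^1_{\overrightarrow u}$ because $\overrightarrow u\in\mathcal{S}_1(\Gamma)$ forces $u_i(\overrightarrow u)\ge\min_{\rho_{-i}}u_i(s'_i,\rho_{-i})$ for all $i,s'_i$, at normal worlds vacuously by the routing, at impossible worlds by fiat --- and since $\mathcal{L}_i$ is total on $\Lambda$ this upgrades to $\Box(RAT\wedge K(\overrightarrow\sigma))$ at every normal world; together with the realisation of all profiles in $\Lambda$ this gives level-$0$ full support. Second, an induction on $\ell$ shows $w^\ell_{\overrightarrow u}\models OMN^\ell$: normality holds by construction, each deviation lands in a world carrying $OMN^{\ell-1}$, and $w^{\ell-1}_{\overrightarrow u}\in\Lambda$ supplies the logically accessible same-profile world required by the third clause of level-$k$ logical omniscience. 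Third, for the higher levels of full support: any world $x$ of an auxiliary structure $N$ (with level-$(k-1)$ full support) satisfying $RAT\wedge K(\overrightarrow\sigma)\wedge OMN^1$ when $k=1$, or $\Box(RAT\wedge K(\overrightarrow\sigma))\wedge OMN^k$ when $k\ge2$, realises a profile in $\mathcal{S}_1(\Gamma)$ --- for $k\ge2$ by cascading the third clause down to a non-normal level-$1$ world, where $RAT$ holds thanks to level-$0$ full support of $N$, and applying the (b)-argument there; for $k=1$ directly --- and $M$ realises that profile with the matching modal formula, by $w^1_{\overrightarrow\sigma_N(x)}$ when $k=1$ and by the normal world $w^k_{\overrightarrow\sigma_N(x)}$ (which carries $OMN^k$ and satisfies $\Box(RAT\wedge K(\overrightarrow\sigma))$ since that formula holds throughout $\Lambda$) when $k\ge2$. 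Finally $w:=w^1_{\overrightarrow s}\in\Lambda\setminus\Xi$ is the required witness. I expect the main obstacle to be exactly this construction and its bookkeeping: keeping the three families of worlds consistent with the non-normal / normal / impossible conventions, defining $f$ so that rationality holds precisely at the level-$1$ worlds while remaining faithful to the ``spheres of decreasing logical omniscience'' picture, and checking that the infinite tower genuinely delivers level-$k$ full support for \emph{every} $k$ rather than only for small $k$.
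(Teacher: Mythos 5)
Your (b)$\Rightarrow$(a) direction is correct and is essentially the paper's own argument: the paper packages exactly this computation (non-normality makes every counterfactual target logically accessible, so rationality plus $\sigma_i(f(w,i,s'_i))=s'_i$ yields the maximin bound over $\mathcal{S}_0=\Sigma$) inside the level-1 case of its full-support lemma for the canonical structure, and you are right that it does not use full support.

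The gap is in your (a)$\Rightarrow$(b) construction. You keep as logically possible the copies $w^\ell_{\overrightarrow{u}}$ for \emph{every} $\overrightarrow{u}\in\mathcal{S}_1(\Gamma)$ at \emph{every} level $\ell$, plus normal worlds realizing profiles outside $\mathcal{S}_1$, and you preserve $RAT$ at all these normal worlds only by routing every non-trivial deviation from a normal world to a logically impossible world, making the rationality clause vacuous. But $f$ is not an arbitrary function satisfying $\sigma_i(f(w,i,s'_i))=s'_i$: the closest-state discipline the paper imposes (spheres ordered by decreasing logical omniscience; $f(w,i,s'_i)$ is a world of the highest admissible level in which $i$ plays $s'_i$, and at equal levels a logically possible world is strictly closer than an impossible one) is among the ``assumptions stated previously'' that the full-support definition requires. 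In your structure, for any deviation $s'_i$ in the projection of $\mathcal{S}_1$, logically possible level-1 worlds playing $s'_i$ exist (you put them there), so an admissible $f$ must send the deviation into $\Lambda$, hence it is logically accessible from the normal source world; then $RAT$ fails at $w^\ell_{\overrightarrow{u}}$ whenever $\overrightarrow{u}\in\mathcal{S}_1\setminus\mathcal{S}_\ell$ (take $i$ and $s'_i$ witnessing the failure of level-$\ell$ individual rationality) and at your normal worlds for non-individually-rational profiles, so level-0 full support ($\forall w\in\Omega,\ w\models RAT\wedge K(\overrightarrow\sigma)$) breaks and the structure is not a valid witness. Conversely, if one allowed your gerrymandered $f$, your structure would realize some $\overrightarrow{u}\notin\mathcal{S}_\ell$ at a full-support structure with $\text{play}(\overrightarrow{u})\wedge\Box RAT\wedge\Box K(\overrightarrow\sigma)\wedge OMN^\ell$, contradicting the paper's Theorem 2 --- so that permissive reading cannot be intended. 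The repair is the paper's canonical structure: make $(\overrightarrow{s},k)$ logically possible only when $\overrightarrow{s}\in\mathcal{S}_k$, define $f$ at level $k$ by the worst-case argmin within $\mathcal{S}_{k-1}$, falling back to an impossible world only when the deviation leaves $\mathcal{S}_{k-1}$ entirely, and realize non-individually-rational profiles only at impossible level-0 worlds. (Your extra normal worlds were motivated by the literal requirement that every profile be realized in $\Lambda$; note the paper's canonical structure realizes those profiles only in $\Omega\setminus\Lambda$, so the intended reading is realization in $\Omega$ --- insisting on the literal $\Lambda$ version is precisely what forces your inadmissible $f$.)
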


Next, we characterize each higher level of elimination of non-individually-rational strategy profiles with higher levels of logical omniscience in normal models.

\begin{theorem}
\label{lemma-epistemic-higher-level}
Let $\Gamma=(P, \Sigma, u)$ be a game in normal form and $k\ge 2$.

Then the following is equivalent:

\begin{enumerate}[label=(\alph*)]
\item $\overrightarrow{s} \in \mathcal{S}_k(\Gamma)$
\item There exists a Kripke structure $M=(\Omega, \Lambda, \Xi, \overrightarrow{s}, \mathcal{K}, \mathcal{L}, f, \models)$ that has full support with respect to rationality and knowledge of strategies, and a logically possible world $w\in\Lambda$ such that

$$w\models\text{play}(\overrightarrow{s}) \wedge \Box RAT \wedge \Box K(\overrightarrow{\sigma}) \wedge OMN^k$$
\end{enumerate}

In other words, the level-$k$ individually rational outcomes are exactly the outcomes $o$ for which there exists a full-support Kripke model in which the agents reach this outcome $o$ under necessary rationality, necessary knowledge of strategies and level-k logical omniscience.

\end{theorem}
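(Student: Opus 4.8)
The plan is to prove the two implications together by induction on $k\ge 2$, using Theorem~\ref{lemma-epistemic-first-level} as the grounding instance at the level $k-1=1$ and, for $k\ge 3$, the statement of Theorem~\ref{lemma-epistemic-higher-level} at level $k-1$ as the induction hypothesis. The observation that makes the induction close is that the implication (b)$\Rightarrow$(a), at any level $j$, should only ever consume the structure's full support up to level $j-1$: deviations from a level-$j$ world land at level $j-1$ and the argument recurses downward, so we may feed the ``competitor'' structures from the definition of full support — which are only assumed to have full support one level down — into (b)$\Rightarrow$(a). I will also use the monotonicity $\mathcal{S}_k(\Gamma)\subseteq\mathcal{S}_{k-1}(\Gamma)$ of the elimination sequence, established in \citep{Fourny2017}.

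For \textbf{(b)$\Rightarrow$(a)}: since $OMN^k$ holds at $w$ and $k\ge 2$, the world $w$ is normal, so $w\models\Box RAT\wedge\Box K(\overrightarrow{\sigma})$, together with logical accessibility being total on $\Lambda$, forces $RAT$ and $K(\overrightarrow{\sigma})$ to hold at every world of $\Lambda$. Assume for contradiction $\overrightarrow{s}\notin\mathcal{S}_k(\Gamma)$: then some agent $i$ has a strategy $\tau_i$ occurring in a profile of $\mathcal{S}_{k-1}(\Gamma)$ with $\min\{u_i(\tau_i,\tau_{-i}):(\tau_i,\tau_{-i})\in\mathcal{S}_{k-1}(\Gamma)\}>u_i(\overrightarrow{s})$. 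By the definition of level-$k$ logical omniscience, $f(w,i,\tau_i)$ satisfies level-$(k-1)$ logical omniscience; and because $\tau_i$ still occurs in $\mathcal{S}_{k-1}(\Gamma)$, the induction hypothesis (resp.\ Theorem~\ref{lemma-epistemic-first-level}) applied to the level-$(k-1)$ full support of $M$ guarantees that $\Lambda$ contains a logically possible level-$(k-1)$ world in which $i$ plays $\tau_i$, so by the closest-state semantics $f(w,i,\tau_i)$ is itself logically possible, hence logically accessible from the normal world $w$. As $f(w,i,\tau_i)\in\Lambda$ it carries $RAT\wedge K(\overrightarrow{\sigma})$, and when $k-1\ge 2$ it is normal and therefore also carries $\Box RAT\wedge\Box K(\overrightarrow{\sigma})$; the induction hypothesis (resp.\ Theorem~\ref{lemma-epistemic-first-level}) then puts $\overrightarrow{\sigma}(f(w,i,\tau_i))$ in $\mathcal{S}_{k-1}(\Gamma)$, with $i$ playing $\tau_i$ there, so $u_i(\overrightarrow{\sigma}(f(w,i,\tau_i)))\ge\min\{u_i(\tau_i,\tau_{-i}):(\tau_i,\tau_{-i})\in\mathcal{S}_{k-1}(\Gamma)\}>u_i(\overrightarrow{s})$ — contradicting rationality of $i$ at $w$, which for this logically accessible deviation demands $u_i(\overrightarrow{s})\ge u_i(\overrightarrow{\sigma}(f(w,i,\tau_i)))$. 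Hence $\overrightarrow{s}\in\mathcal{S}_k(\Gamma)$.

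For \textbf{(a)$\Rightarrow$(b)}, the plan is to build $M$ in layers. For each level $\ell\in\{2,\dots,k\}$ I would put one normal, logically possible world $w_{\ell,\overrightarrow{t}}$ (with $\overrightarrow{\sigma}(w_{\ell,\overrightarrow{t}})=\overrightarrow{t}$) for every $\overrightarrow{t}\in\mathcal{S}_\ell(\Gamma)$, at level $1$ one non-normal logically possible world per profile of $\mathcal{S}_1(\Gamma)$, and enough logically impossible worlds (level $0$ included) that every profile of $\Sigma$ is realized somewhere in $\Lambda$ (as level-$0$ full support requires) and that the closest-state function has targets. Take $\mathcal{K}_i$ to be equality, let $\mathcal{L}_i$ connect every normal world to all of $\Lambda$ and every non-normal world to all of $\Omega$, and, for $\ell\ge 2$, let $f(w_{\ell,\overrightarrow{t}},i,\tau_i)$ be the level-$(\ell-1)$ world realizing the $u_i$-minimizing profile of $\{(\tau_i,\tau_{-i})\in\mathcal{S}_{\ell-1}(\Gamma)\}$ when that set is non-empty (a logically possible world) and a logically impossible level-$(\ell-1)$ world realizing $(\tau_i,\tau_{-i})$ for some $\tau_{-i}$ otherwise; at level $1$ and in logically impossible worlds, send $f$ into level $0$ (the level-$1$ slice being arranged exactly so that Theorem~\ref{lemma-epistemic-first-level} applies), and in logically impossible worlds set $RAT$, $K(\overrightarrow{\sigma})$ and the relevant $OMN^\ell$ to true by hand. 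The verifications I expect: each $w_{\ell,\overrightarrow{t}}$ satisfies $OMN^\ell$, by the downward clause (every deviation lands at level $\ell-1$) and the ``same profile one level down'' clause (since $\overrightarrow{t}\in\mathcal{S}_\ell(\Gamma)\subseteq\mathcal{S}_{\ell-1}(\Gamma)$ makes $w_{\ell-1,\overrightarrow{t}}$ available and logically accessible); $RAT$ holds at each $w_{\ell,\overrightarrow{t}}$, because for every $i$ and $\tau_i$ either no completion of $\tau_i$ lies in $\mathcal{S}_{\ell-1}(\Gamma)$ and then $f(w_{\ell,\overrightarrow{t}},i,\tau_i)$ is logically impossible hence inaccessible and the rationality clause is vacuous, or the worst such completion gives $u_i(\overrightarrow{\sigma}(f(w_{\ell,\overrightarrow{t}},i,\tau_i)))=\min\{u_i(\tau_i,\tau_{-i}):(\tau_i,\tau_{-i})\in\mathcal{S}_{\ell-1}(\Gamma)\}\le u_i(\overrightarrow{t})$, which is precisely the maximin inequality witnessing $\overrightarrow{t}\in\mathcal{S}_\ell(\Gamma)$; $K(\overrightarrow{\sigma})$ holds everywhere since $\mathcal{K}_i$ is equality, whence $\Box RAT\wedge\Box K(\overrightarrow{\sigma})$ holds at every normal world since $RAT\wedge K(\overrightarrow{\sigma})$ holds throughout $\Lambda$; and $M$ has full support with respect to $RAT\wedge K(\overrightarrow{\sigma})$, checked level by level by feeding any competitor structure $N$ with level-$(\ell-1)$ full support and world $x\models_N\Box(RAT\wedge K(\overrightarrow{\sigma}))\wedge OMN^\ell$ through (b)$\Rightarrow$(a) at level $\ell$ (available at this stage of the induction) to get $\overrightarrow{\sigma_N}(x)\in\mathcal{S}_\ell(\Gamma)$, which $M$ realizes at $w_{\ell,\overrightarrow{\sigma_N}(x)}$ under the required formula. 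Taking $w=w_{k,\overrightarrow{s}}$ — which exists since $\overrightarrow{s}\in\mathcal{S}_k(\Gamma)$ — then yields $w\models\text{play}(\overrightarrow{s})\wedge\Box RAT\wedge\Box K(\overrightarrow{\sigma})\wedge OMN^k$.

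I expect the main obstacle to be making the closest-state semantics and the full-support condition cooperate so that the maximin defining $\mathcal{S}_k(\Gamma)$ is matched exactly by the rationality clause: in the construction, $f$ along a still-supported deviation must reach the logically accessible world carrying the worst profile of $\mathcal{S}_{k-1}(\Gamma)$ compatible with that deviation, while in (b)$\Rightarrow$(a) full support is exactly what forbids $f$ from slipping into a logically impossible world and thereby hiding a profitable deviation. A secondary subtlety is that, because non-normal worlds are present, necessity is not itself necessary, so $\Box RAT$ and $\Box K(\overrightarrow{\sigma})$ cannot be transported along $f$ and must be re-derived at the target world from the facts that $RAT$ and $K(\overrightarrow{\sigma})$ hold on all of $\Lambda$ and that the target is normal — which is also why the $k-1=1$ case, where only the un-boxed formula is on hand, is precisely the content of Theorem~\ref{lemma-epistemic-first-level}.
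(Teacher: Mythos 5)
Your overall route is essentially the paper's, reorganized: your layered construction for (a)$\Rightarrow$(b) is the paper's canonical Kripke structure (one world per pair of a profile in $\mathcal{S}_\ell(\Gamma)$ and level $\ell$, equality as $\mathcal{K}_i$, logical accessibility total on $\Lambda$, worst-case closest-state function), and your (b)$\Rightarrow$(a) deviation argument is, in substance, the inductive step of the paper's full-support lemma for that canonical structure. The paper obtains (b)$\Rightarrow$(a) by transferring the given world into the canonical structure via the canonical structure's full support, whereas you argue directly inside the given structure; that is why you need, and correctly flag, the refinement that (b)$\Rightarrow$(a) at level $j$ must only consume level-$(j-1)$ full support of the structure it is applied to.

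There is, however, one concrete gap in your (b)$\Rightarrow$(a). From $\overrightarrow{s}\notin\mathcal{S}_k(\Gamma)$ you extract an agent $i$ and a strategy $\tau_i$ occurring in $\mathcal{S}_{k-1}(\Gamma)$ whose minimum over completions in $\mathcal{S}_{k-1}(\Gamma)$ exceeds $u_i(\overrightarrow{s})$, and you then treat $\tau_i$ as a deviation. But the maximin ranges over all of $i$'s strategies occurring in $\mathcal{S}_{k-1}(\Gamma)$, including $\sigma_i(w)$ itself; if the maximizer is $\tau_i=\sigma_i(w)$ (which can be the only one), it is not a deviation: the clause of level-$k$ logical omniscience sending deviations one level down says nothing about it, $f(w,i,\sigma_i(w))$ is just $w$, and rationality yields only the trivial inequality $u_i(\overrightarrow{s})\ge u_i(\overrightarrow{s})$, so no contradiction arises. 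The paper closes exactly this case with the third clause of level-$k$ logical omniscience: there is a logically accessible, hence logically possible, world with the \emph{same} profile satisfying level-$(k-1)$ omniscience; it inherits the boxed (or, if $k-1=1$, unboxed) rationality and knowledge formulas because these hold throughout $\Lambda$, and the induction hypothesis resp.\ Theorem~\ref{lemma-epistemic-first-level} then gives $\overrightarrow{s}\in\mathcal{S}_{k-1}(\Gamma)$, so the minimum over completions of $\sigma_i(w)$ in $\mathcal{S}_{k-1}(\Gamma)$ is at most $u_i(\overrightarrow{s})$, refuting the remaining case. You invoke this clause only when verifying $OMN^\ell$ in your construction, but it is also indispensable in (b)$\Rightarrow$(a). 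A smaller remark: your step ``$f(w,i,\tau_i)$ is logically possible because some logically possible level-$(k-1)$ world where $i$ plays $\tau_i$ exists in $\Lambda$'' leans on the informal sphere-ordering of the closest-state function (logically possible worlds preferred at equal omniscience level); the paper's own proof makes the analogous move with the same degree of informality, so this is a shared looseness rather than a defect specific to your argument.
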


Finally, we can characterize Perfectly Transparent Equilibria in terms of Kripke models by going to arbitrarily high levels of logical omniscience.

\begin{theorem}
\label{theorem-pte-kripke}
Let $\Gamma=(P, \Sigma, u)$ be a game in normal form.

Then the following is equivalent:

\begin{enumerate}[label=(\alph*)]
\item $\overrightarrow{s}$ is a PTE for $\Gamma$
\item For any $k\ge 2$, there exists a Kripke structure $M=(\Omega, \Lambda, \Xi, \overrightarrow{s}, \mathcal{K}, \mathcal{L}, f, \models)$ that has full support with respect to rationality and knowledge of strategies, and a logically possible world $w\in\Lambda$ such that

$$w\models\text{play}(\overrightarrow{s}) \wedge \Box RAT \wedge \Box K(\overrightarrow{\sigma}) \wedge OMN^k$$
\end{enumerate}

In other words, the PTE characterizes the outcomes $o$ for which there exists a full-support Kripke model in which the agents reach this outcome $o$ under necessary rationality, necessary knowledge of strategies and an arbitrarily high level of logical omniscience

\end{theorem}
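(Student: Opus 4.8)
The plan is to derive Theorem~\ref{theorem-pte-kripke} as an essentially immediate consequence of Theorem~\ref{lemma-epistemic-higher-level} and the definition $\mathcal{S}(\Gamma)=\cap_{k}\mathcal{S}_k(\Gamma)$ of the Perfectly Transparent Equilibrium; the theorem is in effect a ``limit'' of the previous one taken over all levels $k$ of logical omniscience, so no new Kripke construction is required.

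For the direction (a)$\Rightarrow$(b) I would argue as follows. Suppose $\overrightarrow{s}$ is a PTE, so that $\overrightarrow{s}\in\mathcal{S}_k(\Gamma)$ for every $k\in\mathbb{N}$. Fixing an arbitrary $k\ge 2$ and applying the (a)$\Rightarrow$(b) direction of Theorem~\ref{lemma-epistemic-higher-level} to the fact $\overrightarrow{s}\in\mathcal{S}_k(\Gamma)$, one obtains a Kripke structure $M$ with full support with respect to $RAT\wedge K(\overrightarrow{\sigma})$ and a world $w\in\Lambda$ with $w\models\text{play}(\overrightarrow{s})\wedge\Box RAT\wedge\Box K(\overrightarrow{\sigma})\wedge OMN^k$. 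Since $k$ was arbitrary, this is precisely statement (b). I would stress here that the witnessing pair $(M,w)$ is permitted to depend on $k$: a normal world carries exactly one level of logical omniscience, and the sets $\mathcal{S}_k(\Gamma)$ may shrink strictly as $k$ grows, so a single model cannot in general realize $OMN^k$ for all $k$ at once --- and (b) only asks for one model per level.

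For the direction (b)$\Rightarrow$(a), I would fix each $k\ge 2$, take the model and world granted by (b) for that $k$, and feed them into the (b)$\Rightarrow$(a) direction of Theorem~\ref{lemma-epistemic-higher-level} to conclude $\overrightarrow{s}\in\mathcal{S}_k(\Gamma)$; hence $\overrightarrow{s}\in\mathcal{S}_k(\Gamma)$ for all $k\ge 2$. It then remains only to recover the two low levels: $\overrightarrow{s}\in\mathcal{S}_0(\Gamma)$ holds because $\mathcal{S}_0(\Gamma)=\Sigma$, and $\overrightarrow{s}\in\mathcal{S}_1(\Gamma)$ can be obtained either from the monotonicity $\mathcal{S}_2(\Gamma)\subseteq\mathcal{S}_1(\Gamma)$ of the iterated deletion (established in \citep{Fourny2017}), or, without invoking that fact, by unpacking level-$2$ logical omniscience: the world $w$ witnessing $OMN^2$ has, by definition, a logically accessible companion world $w'$ satisfying $OMN^1$ with $\overrightarrow{\sigma}(w')=\overrightarrow{s}$, at which $RAT$ and $K(\overrightarrow{\sigma})$ still hold because $w\models\Box RAT\wedge\Box K(\overrightarrow{\sigma})$; since full support entails level-$1$ full support, Theorem~\ref{lemma-epistemic-first-level} applied to $w'$ gives $\overrightarrow{s}\in\mathcal{S}_1(\Gamma)$. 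Putting everything together, $\overrightarrow{s}\in\cap_{k\in\mathbb{N}}\mathcal{S}_k(\Gamma)=\mathcal{S}(\Gamma)$.

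The main --- and essentially only --- obstacle is not in this theorem at all but upstream, packaged entirely into Theorem~\ref{lemma-epistemic-higher-level}: its two directions must supply, respectively, an explicit full-support model realizing an arbitrary level-$k$ individually rational profile at $OMN^k$, and the extraction of level-$k$ individual rationality back out of such a model. The only delicate point intrinsic to the present proof is the passage from ``for all $k\ge 2$'' to ``for all $k\in\mathbb{N}$'', handled above; as a sanity check, note that since $\Sigma$ is finite the decreasing sequence $(\mathcal{S}_k(\Gamma))_k$ stabilizes, so (b) could equivalently be phrased for a single large $k$, but keeping it quantified over all $k\ge 2$ leaves it word-for-word aligned with Theorem~\ref{lemma-epistemic-higher-level}.
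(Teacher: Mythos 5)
Your proposal is correct and follows essentially the same route as the paper: Theorem~\ref{theorem-pte-kripke} is obtained by applying Theorem~\ref{lemma-epistemic-higher-level} levelwise in both directions (with the witnessing model allowed to vary with $k$) and then invoking $\mathcal{S}(\Gamma)=\cap_k\mathcal{S}_k(\Gamma)$. If anything, you are more careful than the paper's own two-line argument, which quantifies over ``any $k\ge 1$'' without comment, whereas you explicitly recover the low levels $\mathcal{S}_0(\Gamma)=\Sigma$ and $\mathcal{S}_1(\Gamma)$ via the monotonicity $\mathcal{S}_2(\Gamma)\subseteq\mathcal{S}_1(\Gamma)$ (or via Theorem~\ref{lemma-epistemic-first-level}), a detail worth keeping.
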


\subsection{Canonical Kripke structure}

In order to prove these lemmas, we are introducing some tools, in particular, the Canonical Kripke structure, which serves as a constructive proof of existence of a counterfactual structure underlying the PTE.

\begin{definition}[Canonical Kripke structure]

For a game $\Gamma$, we define the canonical Kripke structure\footnote{The difference between a structure and a model is that, in a model, we specify an actual world.}

$$M(\Gamma)=(\Omega, \Lambda, \Xi, \overrightarrow{\sigma}, \mathcal{K}, \mathcal{L}, \models)$$

 recursively as follows.

Let $\Omega= \Sigma \times \mathbb{N}$. Worlds are thus organized in levels.

Let $\Lambda=\{ (\overrightarrow{s}, k) \in \Sigma \times \mathbb{N} | k \ge 1 \wedge \overrightarrow{s}\in\mathcal{S}_k(\Gamma) \}$. In particular, worlds on level 0 are thus logically impossible worlds.

Let $\Xi=\{ (\overrightarrow{s}, k) \in \Sigma \times \mathbb{N} | k \ge 2 \wedge \overrightarrow{s}\in\mathcal{S}_k(\Gamma) \}$. In particular, worlds on level 1 (and 0) are thus non-normal worlds.

Let $\overrightarrow{\sigma}$ be defined as

$$\forall w=(\overrightarrow{s}, k)\in \Omega, \overrightarrow{\sigma}(w)=\overrightarrow{s}$$

We also introduce the "level" notation:

$$\forall w=(\overrightarrow{s}, k)\in \Omega, \Lambda, \lambda(w)=k$$

Let $(\mathcal{K}_i)_i$ be defined as the equality relation, i.e.:

$$\forall i\in P, \forall w, w'\in \Omega, w\mathcal{K}_i w' \iff w=w'$$

Let $(\mathcal{L}_i)_i$ be defined as:

$$\forall i\in P, \forall w, w'\in \Omega, w\mathcal{L}_i w' \iff w' \in 	\Lambda \vee w\notin \Xi$$

Note that this is a total relation on $\Lambda$, and any world is logically accessible from $\Omega\setminus\Xi$.

Let $f$ be defined as a "worst case scenario" w.r.t. the opponent's reaction, that is (the first case that applies is taken):

\begin{align}
\forall w \in \Omega, \forall i\in P, \forall s'_i\in\Sigma_i,f(w, i, s'_i)=\\
\begin{cases}
((s'_i, s_{-i}), 0)& \text{ if } \lambda(w)=0\\
w & \text{ if } \sigma_i(w)=s'_i\\
((s'_i, \displaystyle \arg \min_{
\begin{array}{c}
\tau_{-i}\in\Sigma_{-i}
\\
{\scriptstyle \text{s.t.} (s'_i, \tau_{-i})\in \mathcal{S}_{\lambda(w)-1}(\Gamma)}
\end{array}
} u_i(s'_i, \tau_{-i}) ), \lambda(w)-1)& \text{ if the min does not diverge }\\
((s'_i, s_{-i}), \lambda(w)-1)& \text{ otherwise }\\
\end{cases}
\end{align}

For any logically impossible world $w\notin\Lambda$, we set

$$w \models RAT$$
$$w \models K(\overrightarrow\sigma)$$
$$w \models \text{play}(\overrightarrow\sigma(w))$$
$$w \models OMN^{\lambda(w)}$$

as well as the transitive closure thereof via $\wedge$ and $\neg$.

For non-normal worlds, logically possible or impossible, we also set modal operators in such a way that anything is possible and nothing is necessary.

\end{definition}

\subsection{Properties of canonical Kripke structures}

The canonical Kripke structure has a few properties that we express as lemmas.

\begin{lemma}[Cascading levels of counterfactuals in canonical Kripke structures]
\label{lemma-cascading-levels}

Let $\Gamma$ be a game and $M(\Gamma)=(\Omega, \Lambda, \Xi, \overrightarrow{\sigma}, \mathcal{K}, \mathcal{L})$ its canonical Kripke structure.

We have, for any $w\in\Omega$ with $\lambda(w)\ge1$ and any $s'_i\neq \sigma_i(w)$: 

$$\lambda(f(w, i, s'_i)) = \lambda(w)-1$$

and for $w\in\Lambda$ with $\lambda(w)\ge1$ and any $s'_i\neq \sigma_i(w)$: 

$$w\models \Diamond^c \text{play}_i(s'_i) \iff \overrightarrow{s}(f(w, i, s'_i))\in\mathcal{S}_{\lambda(w)-1}$$

\end{lemma}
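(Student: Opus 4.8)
The plan is to prove both identities by a direct unfolding of the canonical construction, since everything involved — the closest-state function $f$, the relations $\mathcal{L}_i$, the sets $\Lambda$ and $\Xi$, and the operator $\Diamond^c$ — is given explicitly there. The only genuine structure is a split on the level $\lambda(w)$: the logically possible but non-normal level $\lambda(w)=1$ must be handled separately from the normal levels $\lambda(w)\ge 2$.

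For the first identity, fix $w\in\Omega$ with $\lambda(w)\ge1$ and $s'_i\neq\sigma_i(w)$. In the four-case definition of $f(w,i,s'_i)$, the assumption $\lambda(w)\ge 1$ excludes the first case and $s'_i\neq\sigma_i(w)$ excludes the second, so one of the last two cases applies; each of them returns a world whose second coordinate is $\lambda(w)-1$, so $\lambda(f(w,i,s'_i))=\lambda(w)-1$. (Incidentally, when $\lambda(w)=1$ the minimization in the third case ranges over $\{\tau_{-i}\mid(s'_i,\tau_{-i})\in\mathcal{S}_0(\Gamma)\}=\Sigma_{-i}\neq\emptyset$, so it is the third case that fires; but which of the two fires is irrelevant to the level, and this already shows the identity does not depend on logical possibility of $w$.)

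For the second identity, fix $w\in\Lambda$ with $\lambda(w)\ge 1$ and $s'_i\neq\sigma_i(w)$, and write $w'=f(w,i,s'_i)$; by the first identity $\lambda(w')=\lambda(w)-1$, and by construction $\sigma_i(w')=s'_i$, so $w'\models\text{play}_i(s'_i)$ in any case. If $\lambda(w)=1$ then $w\in\Lambda\setminus\Xi$ is non-normal, so $w\models\Diamond^c\text{play}_i(s'_i)$ holds by the overriding non-normal-world clause ``$\Diamond^c A$ always holds''; and on the other side $\overrightarrow{\sigma}(w')\in\Sigma=\mathcal{S}_0(\Gamma)=\mathcal{S}_{\lambda(w)-1}(\Gamma)$ holds trivially, so the asserted equivalence is just ``true $\iff$ true''. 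If $\lambda(w)\ge 2$ then $w\in\Xi$ is normal, so for every agent $j$ we have $w\mathcal{L}_j x\iff x\in\Lambda$; since the $\mathcal{L}_j$ all coincide here, the characterization of $\Diamond^c$ recalled in Section~\ref{section-syntax} (which in normal worlds reads $w\models\Diamond^c\text{play}_i(s'_i)\iff w\mathcal{L}_j f(w,i,s'_i)$ for some $j$) gives $w\models\Diamond^c\text{play}_i(s'_i)\iff w'\in\Lambda$. Finally $\lambda(w')=\lambda(w)-1\ge 1$, so unfolding $\Lambda=\{(\overrightarrow{t},k)\mid k\ge1\wedge\overrightarrow{t}\in\mathcal{S}_k(\Gamma)\}$ yields $w'\in\Lambda\iff\overrightarrow{\sigma}(w')\in\mathcal{S}_{\lambda(w)-1}(\Gamma)$, which is the stated right-hand side. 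Concatenating the equivalences (and noting $\overrightarrow{s}(w')=\overrightarrow{\sigma}(w')$) finishes the proof, and the two cases $\lambda(w)=1$ and $\lambda(w)\ge 2$ exhaust $w\in\Lambda$.

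I do not expect a deep obstacle — the argument is bookkeeping over the definition of the canonical structure — but the spot requiring real care is the level-$1$ case of the second identity: there the truth of $\Diamond^c\text{play}_i(s'_i)$ comes from the special non-normal-world semantics rather than from logical accessibility, while at the same time the right-hand side is vacuously true because $\mathcal{S}_0(\Gamma)=\Sigma$. One must observe that these two degeneracies occur together so the equivalence still holds at the normal/non-normal boundary; overlooking this would leave a spurious ``gap'' at $\lambda(w)=1$. A secondary point worth stating explicitly is that once $\lambda(w)\ge 1$ and $s'_i\neq\sigma_i(w)$ only the third and fourth cases of $f$ remain, and both assign level $\lambda(w)-1$, so the first identity is insensitive to which of them applies.
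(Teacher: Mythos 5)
Your proof is correct and follows essentially the same route as the paper's: unfold the canonical definition of $f$ for the level equality, then split the $\Diamond^c$ equivalence into the non-normal case $\lambda(w)=1$ (where both sides are trivially true, since $\mathcal{S}_0=\Sigma$) and the normal case $\lambda(w)\ge 2$ (where accessibility reduces to membership in $\Lambda$). The only cosmetic difference is in the right-to-left direction at levels $\ge 2$: the paper argues that the $\min$ in the third case of $f$ does not diverge, whereas you conclude $f(w,i,s'_i)\in\Lambda$ directly from its level and the membership of its profile in $\mathcal{S}_{\lambda(w)-1}$, which is a harmless shortcut.
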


\begin{proof}[Lemma \ref{lemma-cascading-levels}]
The first equality is obtained for any level $k \ge 1$, because $f(w, i, s'_i)$ is one level below $w$ for any $s'_i\neq \sigma_i(w)$.

For the second statement, left to right, we have $w\mathcal{L}_i f(w, i, s'_i)$ at level 1 because w is non-normal. The set membership is trivial for $k=1$ as $\mathcal{S}_0=\Sigma$. For higher levels, we have $f(w, i, s'_i)\in\Lambda$ and the set membership holds by definition of $\Lambda$.

For the second statement, right to left, the left-hand side holds trivially for $\lambda(w)=1$ by definition of non-normality. For higher levels, if $\overrightarrow{s}(f(w, i, s'_i))\in\mathcal{S}_{\lambda(w)-1}$, then the minimum in the definition of $f$ does not diverge ($s'_i$ is in the projection  of $\mathcal{S}_{\lambda(w)-1}$ onto agent $i$'s strategy space), and thus $f(w, i, s'_i)\in\Lambda$, $w\mathcal{L}_i f(w, i, s'_i)$ and $w\models \Diamond^c \text{play}_i(s'_i)$.

\end{proof}

\begin{lemma}[Level-k logical omniscience in canonical Kripke structures]
\label{lemma-level-k-logical-omniscience}

Let $\Gamma$ be a game and $M(\Gamma)=(\Omega, \Lambda, \Xi, \overrightarrow{\sigma}, \mathcal{K}, \mathcal{L})$ its canonical Kripke structure.

We have, for any $w\in\Omega$ such that $\lambda(w)\ge 1$,

$$\forall w\in \Omega, w \models OMN^{\lambda(w)}$$

\end{lemma}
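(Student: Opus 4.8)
The plan is to prove $w \models OMN^{\lambda(w)}$ by induction on the level $k = \lambda(w) \ge 1$, handling logically impossible worlds uniformly at every level and reducing the logically possible case at level $k \ge 2$ to the three clauses defining level-$k$ logical omniscience, with the passage between levels $k$ and $k-1$ supplied by Lemma~\ref{lemma-cascading-levels} and by the monotonicity of the elimination sequence.

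First I would dispatch the logically impossible worlds: if $w \in \Omega \setminus \Lambda$ with $\lambda(w) = k \ge 1$, the canonical structure sets $w \models OMN^{\lambda(w)}$ by fiat, so there is nothing to prove. It remains to treat $w \in \Lambda$. For the base case $k = 1$, a level-$1$ world of $\Lambda$ cannot lie in $\Xi$ (which only contains worlds of level $\ge 2$), hence it is non-normal, and the clause defining level-$1$ logical omniscience gives $w \models OMN^1$ directly.

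For the inductive step, fix $w \in \Lambda$ with $\lambda(w) = k \ge 2$, so $\overrightarrow{\sigma}(w) \in \mathcal{S}_k(\Gamma)$; since $k \ge 2$ this forces $w \in \Xi$, which is the normality clause. For the second clause, let $i \in P$ and $s'_i \neq \sigma_i(w)$ be arbitrary: by Lemma~\ref{lemma-cascading-levels}, $\lambda(f(w,i,s'_i)) = k - 1 \ge 1$, and $f(w,i,s'_i) \models OMN^{k-1}$ either by the induction hypothesis (if $f(w,i,s'_i) \in \Lambda$) or by the manual assignment (if $f(w,i,s'_i) \notin \Lambda$); so every deviation lands in a world satisfying level-$(k-1)$ logical omniscience. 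For the third clause, put $w' = (\overrightarrow{\sigma}(w), k-1)$: by the monotonicity $\mathcal{S}_k(\Gamma) \subseteq \mathcal{S}_{k-1}(\Gamma)$ of the iterated deletion (the ``monotony property'' noted in the definition of level-$k$ logical omniscience, established in \citep{Fourny2017}) we get $\overrightarrow{\sigma}(w) \in \mathcal{S}_{k-1}(\Gamma)$, hence $w' \in \Lambda$; by the definition of $\mathcal{L}$ in the canonical structure this gives $w \mathcal{L}_i w'$ for every $i$, we plainly have $\overrightarrow{\sigma}(w') = \overrightarrow{\sigma}(w)$, and $w' \models OMN^{k-1}$ by the induction hypothesis. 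All three clauses hold, so $w \models OMN^k$, which closes the induction.

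The step I expect to be the main obstacle is the third clause, which rests entirely on $\mathcal{S}_k(\Gamma) \subseteq \mathcal{S}_{k-1}(\Gamma)$, i.e.\ on a profile surviving round $k$ of elimination having already survived round $k-1$. Read literally, the displayed recursion for $\mathcal{S}_k$ does not make this visibly automatic, since passing to a smaller feasible set can move the level-$k$ maximin in either direction; I would therefore either invoke it as a known fact about the iterated deletion from \citep{Fourny2017}, or adopt the reading $\mathcal{S}_k(\Gamma) = \mathcal{S}_{k-1}(\Gamma) \cap \{\overrightarrow{\sigma} : \forall i,\, u_i(\overrightarrow{\sigma}) \ge \max\min \cdots\}$ consistent with the ``survive the $k^{th}$ round of elimination'' wording. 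Everything else is routine unwinding of the canonical structure's definitions together with a single appeal to Lemma~\ref{lemma-cascading-levels}.
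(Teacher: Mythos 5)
Your proof is correct and follows essentially the same route as the paper's: induction on the level, with impossible worlds handled by the manual truth assignment, the base case by non-normality, the deviation clause by Lemma~\ref{lemma-cascading-levels} plus the induction hypothesis, and the third clause by taking $(\overrightarrow{\sigma}(w), k-1)$ and invoking $\mathcal{S}_k(\Gamma)\subseteq\mathcal{S}_{k-1}(\Gamma)$. The monotonicity worry you flag is real but affects the paper identically, which likewise asserts $\mathcal{S}_{k+1}\subset\mathcal{S}_k$ without proof (it is the ``monotony'' fact from \citep{Fourny2017}, and follows since the round-$k$ maximin thresholds weakly increase when $\mathcal{S}_{k-1}\neq\emptyset$), so your treatment is adequate.
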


\begin{proof}[Lemma \ref{lemma-level-k-logical-omniscience}]

First, for logically impossible worlds, this is true because we set the truth assignment so.

Let us start with a logically possible, non-normal world w such that $$\lambda(w)=1$$

On level 1, this hold simply because w is non-normal, which characterizes level-1 logical omniscience:

$$w \models OMN^1$$

Now, let us assume that this is true for $\lambda(w)=k\ge 1$ and prove that this holds for $\lambda(w)=k+1$.

Let us pick any logically possible (and normal) $w$ such that $$\lambda(w)=k+1\ge 2$$

First, w is normal.

Second, for any (logically possible or not) deviation of strategy, we know from Lemma \ref{lemma-cascading-levels} that:

$$\lambda(f(w, i, s'_i))=k$$

Using the induction hypothesis, we have:

$$f(w, i, s'_i) \models OMN^k$$

Finally, $(\overrightarrow\sigma(w), k)$ is a world satisfying level-k logical omniscience in which the same strategy profile is reached. Because $\mathcal{S}_{k+1}\subset\mathcal{S}_k$, $\overrightarrow\sigma(w)\in\mathcal{S}_k$, $(\overrightarrow\sigma(w), k)\in\Lambda$, and it is thus logically accessible from w.

The above three points fulfill the definition of level-k logical omniscience and thus:

$$w \models OMN^{k+1}$$

This ends the recursion. \qed

\end{proof}

\begin{lemma}[Necessary rationality in canonical Kripke structures]
\label{lemma-necessary-rationality}

Let $\Gamma$ be a game and $M(\Gamma)=(\Omega, \Lambda, \Xi, \overrightarrow{\sigma}, \mathcal{K}, \mathcal{L}, \models)$ its canonical Kripke structure.

We have, for any $w\in\Omega$ (and in particular $w\in\Lambda$):

$$w \models RAT$$

And thus, for any $w\in\Xi$:

$$w \models \Box RAT$$

\end{lemma}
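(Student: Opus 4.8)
The plan is to verify the definition of rationality directly in the canonical structure $M(\Gamma)$, world by world, distinguishing three cases: logically impossible worlds, logically possible non-normal worlds (level $1$), and normal worlds (level $k\ge2$). No induction is needed, since everything we rely on — membership of $\overrightarrow{s}$ in $\mathcal{S}_{\lambda(w)}(\Gamma)$, the explicit four-case form of $f$, and Lemma \ref{lemma-cascading-levels} — is available outright. Once $w\models RAT$ is established for every $w\in\Omega$, the statement $w\models\Box RAT$ for $w\in\Xi$ is immediate: $\Box RAT$ at a normal world requires $RAT$ at every logically accessible world, and $RAT$ then holds everywhere in $\Omega\supseteq\Lambda$.

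For a logically impossible world $w\notin\Lambda$, $w\models RAT$ holds by the manual truth assignment in the definition of $M(\Gamma)$, so nothing is to prove. For a logically possible non-normal world $w=(\overrightarrow{s},1)$, non-normality makes $w\mathcal{L}_i f(w,i,s'_i)$ hold for every deviation, so I must check $u_i(\overrightarrow{\sigma}(w))\ge u_i(\overrightarrow{s}(f(w,i,s'_i)))$ for all $s'_i$. If $s'_i=\sigma_i(w)$ this is trivial since $f(w,i,s'_i)=w$. Otherwise, because $\lambda(w)-1=0$ and $\mathcal{S}_0(\Gamma)=\Sigma$, the minimum in the third case of $f$ never diverges, so $f(w,i,s'_i)=((s'_i,\tau^{*}_{-i}),0)$ with $\tau^{*}_{-i}\in\arg\min_{\tau_{-i}\in\Sigma_{-i}}u_i(s'_i,\tau_{-i})$. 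Then $u_i(\overrightarrow{s}(f(w,i,s'_i)))=\min_{\tau_{-i}\in\Sigma_{-i}}u_i(s'_i,\tau_{-i})\le\max_{\tau_i\in\Sigma_i}\min_{\tau_{-i}\in\Sigma_{-i}}u_i(\tau_i,\tau_{-i})$, and this maximin is $\le u_i(\overrightarrow{s})$ precisely because $\overrightarrow{s}\in\mathcal{S}_1(\Gamma)$.

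For a normal world $w=(\overrightarrow{s},k)$ with $k\ge2$, fix an agent $i$ and a deviation $s'_i\neq\sigma_i(w)$ (again $s'_i=\sigma_i(w)$ is trivial). Since $w\in\Xi$, $w\mathcal{L}_i f(w,i,s'_i)$ holds iff $f(w,i,s'_i)\in\Lambda$, which by Lemma \ref{lemma-cascading-levels} is equivalent to $\overrightarrow{s}(f(w,i,s'_i))\in\mathcal{S}_{k-1}(\Gamma)$. If the minimum in the definition of $f$ diverges — i.e. $s'_i$ is not in the projection of $\mathcal{S}_{k-1}(\Gamma)$ onto $\Sigma_i$ — then $f(w,i,s'_i)=((s'_i,s_{-i}),k-1)\notin\Lambda$, the antecedent of the rationality implication fails, and nothing needs to be checked. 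If the minimum does not diverge, then $f(w,i,s'_i)=((s'_i,\tau^{*}_{-i}),k-1)$ with $(s'_i,\tau^{*}_{-i})\in\mathcal{S}_{k-1}(\Gamma)$ realizing $\min_{\tau_{-i}\,:\,(s'_i,\tau_{-i})\in\mathcal{S}_{k-1}}u_i(s'_i,\tau_{-i})$, so the antecedent holds and I must show $u_i(\overrightarrow{s})\ge u_i(s'_i,\tau^{*}_{-i})$. This is exactly the defining inequality of $\mathcal{S}_k(\Gamma)$, using $\tau_i=s'_i$ as a feasible choice in the outer maximum (feasible precisely because the minimum did not diverge).

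The main obstacle is the bookkeeping around the ``min diverges'' branch of $f$: one must see that a diverging minimum is exactly the situation in which the target world leaves $\Lambda$ and hence ceases to be logically accessible from the normal source world, so that the rationality requirement there is vacuous rather than violated. Aligning the four cases of $f$ with the max-min shape of the definition of $\mathcal{S}_k(\Gamma)$, and using Lemma \ref{lemma-cascading-levels} to turn $\Diamond^c$/$\mathcal{L}_i$ statements into set-membership in $\mathcal{S}_{k-1}(\Gamma)$, is where the care is needed; the inequalities themselves are then one-liners, and $\Box RAT$ on $\Xi$ follows since $RAT$ has been shown to hold on all of $\Omega$.
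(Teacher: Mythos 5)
Your proposal is correct and follows essentially the same route as the paper's own proof: establish $w\models RAT$ on all of $\Omega$ by combining the manual assignment on logically impossible worlds with the maximin inequality defining $\mathcal{S}_{\lambda(w)}(\Gamma)$ and the fact that $f$ returns the (argmin) worst case in $\mathcal{S}_{\lambda(w)-1}(\Gamma)$ exactly when the deviation is logically accessible, then read off $\Box RAT$ on $\Xi$ from the definition of necessity. The only difference is presentational: you split into level-$1$ and level-$k\ge 2$ cases and track the ``min diverges'' branch of $f$ explicitly, whereas the paper treats all $\lambda(w)\ge 1$ uniformly via Lemma \ref{lemma-cascading-levels}.
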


\begin{proof}[Lemma \ref{lemma-necessary-rationality}]

We have to prove rationality in any possible world.

For logically impossible worlds $w\notin\Lambda$, rationality holds because the truth assignment is set that way 
 ($w \models RAT$).

Let $w\in\Lambda$ be a logically possible world (thus on a level at least 1). We prove that all agents are rational in this world.

We have $\lambda(w)=k\ge 1$.

Let $i\in P$ be an agent. If no deviation is logically possible for agent $i$, then we have already established that the agent is rational in $w$ by definition of rationality.

If there is one possible deviation for agent $i$, let us assume that $s'_i$ is a logically possible deviation of strategy

$$w\models \Diamond^c \text{play}_i(s'_i)$$

and show that this deviation would give $i$ a worse or equal payoff.

Since $w\in\Lambda$, we have $\overrightarrow{s}=\overrightarrow{\sigma}(w)\in\mathcal{S}_{\lambda(w)}$, and we have by definition of $(\mathcal{S}_k)_{k\in \mathbb{N}}$:

\begin{align}
 u_i(\overrightarrow{s}) \ge \max_{
\begin{array}{c}
\tau_i\in\Sigma_i
\\
{\scriptstyle \text{s.t.} \exists \tau_{-i}\in\Sigma_{-i}, (\tau_i, \tau_{-i})\in \mathcal{S}_{k-1}}
\end{array}
}
\quad
\min_{
\begin{array}{c}
\tau_{-i}\in\Sigma_{-i}
\\
{\scriptstyle \text{s.t.} (\tau_i, \tau_{-i})\in \mathcal{S}_{k-1}}
\end{array}
} u_i(\tau_i, \tau_{-i})
\end{align}

Since $s'_i$ is logically possible, $f(w', i, s'_i)\in\Lambda$, and we have because of Lemma \ref{lemma-cascading-levels}:

$$\overrightarrow{s}(f(w', i, s'_i))\in \mathcal{S}_{k-1}$$

And thus:

$$\exists \tau_{-i}\in\Sigma_{-i}, (s'_i, \tau_{-i})\in \mathcal{S}_{k-1}(\Gamma)$$

Thus $s'_i$ belongs to the set over which the maximum is taken, and thus by definition of the maximum:

\begin{align}
w\models \Diamond^c \text{play}_i(s'_i) \implies   u_i(\overrightarrow{s}) \ge 
\min_{
\begin{array}{c}
\tau_{-i}\in\Sigma_{-i}
\\
{\scriptstyle \text{s.t.} (s'_i, \tau_{-i})\in \mathcal{S}_{k-1}(\Gamma)}
\end{array}
} u_i(s'_i, \tau_{-i})
\end{align}

And by definition of the argmin:

\begin{align}
 w\models \Diamond^c \text{play}_i(s'_i) \implies  u_i(\overrightarrow{s}) \ge  u_i(s'_i, \arg \min_{
\begin{array}{c}
\tau_{-i}\in\Sigma_{-i} \\
{\scriptstyle \text{s.t.} (s'_i, \tau_{-i})\in \mathcal{S}_{k-1}(\Gamma)}
\end{array}} u_i(s'_i, \tau_{-i}))
\end{align}

By definition of $f$ for a logically possible deviation at level at least 1:

\begin{align}
w\models \Diamond^c \text{play}_i(s'_i) \implies   u_i(\overrightarrow{s}) \ge  u_i(\overrightarrow{\sigma}(f(w', i, s'_i)))
\end{align}

This holds for any $i$, $s'_i$, thus rationality holds at $w$:

$$w \models RAT$$

Thus,

$$\forall w\in\Omega,  w \models RAT$$

And thus, since all agents are rational in all logically possible worlds, by definition of the necessity operator in a normal world, where exactly the logically possible worlds are logically accessible:

$$\forall w\in\Xi, w \models \Box RAT$$

This finishes the proof. \qed

\end{proof}

\begin{lemma}[Necessary knowledge of strategies in canonical Kripke structures]
\label{lemma-necessary-knowledge-of-strategies}

Let $\Gamma$ be a game and $M=(\Omega, \Lambda, \Xi, \overrightarrow{\sigma}, \mathcal{K}, \mathcal{L}, \models$ its canonical Kripke structure.

We have, for any $w\in\Omega$ and in particular $w\in\Lambda$: 

$$w \models K(\overrightarrow{\sigma})$$

And thus for any $w\in\Xi$:

$$w \models \Box K(\overrightarrow{\sigma})$$

\end{lemma}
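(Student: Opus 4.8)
The plan is to exploit the fact that, in the canonical Kripke structure, each epistemic accessibility relation $\mathcal{K}_i$ is plain equality, which collapses ``knowledge of the strategy profile'' to a triviality. Recall that $w \models K(\overrightarrow{\sigma})$ holds when every agent $i$ knows the world variable $\overrightarrow{\sigma}$ at $w$, i.e.\ when $\overrightarrow{\sigma}(w') = \overrightarrow{\sigma}(w)$ for all $w'$ with $w\mathcal{K}_i w'$. Since $w\mathcal{K}_i w'$ forces $w' = w$ in the canonical structure, this equality is immediate, so $w \models K(\overrightarrow{\sigma})$ for every logically possible world $w \in \Lambda$.

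Next I would dispatch the logically impossible worlds: for $w \notin \Lambda$, the canonical construction stipulates $w \models K(\overrightarrow{\sigma})$ directly (together with the closure of this assignment under $\wedge$ and $\neg$), so there is nothing to verify beyond invoking the definition. Combining the two cases yields $w \models K(\overrightarrow{\sigma})$ for every $w \in \Omega$, logically possible or not.

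Finally, for the necessity statement, let $w \in \Xi$ be a normal world. By definition of $\Box$ in a normal world, $w \models \Box K(\overrightarrow{\sigma})$ holds precisely when the event $\{\, w' \in \Omega : w' \models K(\overrightarrow{\sigma}) \,\}$ contains every world logically accessible from $w$ via some $\mathcal{L}_i$. But we have just shown that this event is all of $\Omega$, so it certainly contains every $\mathcal{L}_i$-successor of $w$; hence $w \models \Box K(\overrightarrow{\sigma})$ for all $w \in \Xi$.

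I expect no substantial obstacle here. The only two points requiring a moment's care are: (i) reading $K(\overrightarrow{\sigma})$ through the definition of \emph{knowledge of a world variable} rather than that of \emph{knowledge of an event}, since a strategy profile is not an event; and (ii) remembering that the claim must also be checked at logically impossible worlds, where it holds solely because the truth assignment was fixed that way in the canonical construction, so there is nothing to compute.
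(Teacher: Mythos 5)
Your proof is correct and follows essentially the same route as the paper's own (much terser) argument: the equality of each $\mathcal{K}_i$ makes knowledge of the world variable $\overrightarrow{\sigma}$ trivial, impossible worlds are handled by the stipulated truth assignment, and $\Box K(\overrightarrow{\sigma})$ at normal worlds follows from the definition of necessity since the event holds at every world. Your write-up simply makes explicit the details the paper leaves implicit; no issues.
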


\begin{proof}[Lemma \ref{lemma-necessary-knowledge-of-strategies}]

This follows directly from the fact that the epistemic accessibility relation is the equality relation, and from the definition of $\Box$ in normal worlds.

\end{proof}

\begin{lemma}[Agent decisions in a canonical Kripke structure]
\label{lemma-agent-decisions}

Let $\Gamma$ be a game and $M(\Gamma)=(\Omega, \Lambda, \Xi, \overrightarrow{\sigma}, \mathcal{K}, \mathcal{L}, \models)$ its canonical Kripke structure.

We have, for any $w\in\Omega$: 

$$w \models \text{play}(\overrightarrow{\sigma}(w))$$

\end{lemma}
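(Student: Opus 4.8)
The plan is a short case analysis on the status of the world $w$ in the canonical Kripke structure $M(\Gamma)$, splitting according to whether $w$ is logically impossible, or logically possible (normal or non-normal). In every case the claim reduces to unwinding a definition, so no genuine difficulty is expected; the lemma merely records, for convenient later use, that the $\models$-labelling of the canonical structure is consistent with its $\overrightarrow{\sigma}$ component.

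First I would handle logically impossible worlds $w \notin \Lambda$ (equivalently, those with $\lambda(w)=0$): the definition of the canonical Kripke structure explicitly stipulates, among its Rantala-style manual truth assignments, that $w \models \text{play}(\overrightarrow{\sigma}(w))$. So the claim holds by construction. Next I would treat logically possible worlds $w \in \Lambda$. Writing $w = (\overrightarrow{s}, k)$, the definition gives $\overrightarrow{\sigma}(w) = \overrightarrow{s}$, i.e. the strategy profile ``picked'' in $w$ is exactly $\overrightarrow{\sigma}(w)$; and by the semantics of the atomic formula $\text{play}(\cdot)$ from Section \ref{section-syntax}, $w \models \text{play}(\overrightarrow{s})$ holds precisely when the profile picked in $w$ is $\overrightarrow{s}$. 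Hence $w \models \text{play}(\overrightarrow{\sigma}(w))$.

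The one point to be careful about is that this last argument applies uniformly to normal and to non-normal logically possible worlds: the special rules for non-normal worlds only reassign the modal connectives ($\Box$, $\Diamond$, $\Diamond^c$), while $\text{play}(\cdot)$ is an atomic, non-modal propositional formula and therefore remains governed by Section \ref{section-syntax} in both regimes. This exhausts all worlds of $\Omega$, completing the proof; there is no substantive obstacle to overcome.
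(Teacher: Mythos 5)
Your proof is correct and matches the paper's own argument, which likewise splits into logically possible worlds (where the claim is immediate from the semantics of $\text{play}(\cdot)$ and the definition of $\overrightarrow{\sigma}$) and logically impossible worlds (where the truth assignment is stipulated by construction). Your extra remark that the non-normal-world rules only override modal connectives is a harmless clarification, not a departure from the paper's route.
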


\begin{proof}[Lemma \ref{lemma-necessary-knowledge-of-strategies}]

This follows directly from the definition of $\overrightarrow\sigma$ for logically possible worlds. For impossible worlds, the truth assignment is set that way.

\end{proof}

\begin{lemma}[Full-support by the canonical Kripke structure with respect to rationality and knowledge of strategies]
\label{lemma-completeness}

Let $\Gamma$ be a game and $M(\Gamma)=(\Omega, \Lambda, \Xi, \overrightarrow{\sigma}, \mathcal{K}, \mathcal{L}, \models)$ its canonical Kripke structure.

$M$ has full support with respect to $RAT\wedge K(\overrightarrow\sigma)$.
\end{lemma}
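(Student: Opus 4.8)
The plan is to prove, by induction on $k$, that the canonical structure $M(\Gamma)$ has level-$k$ full support with respect to $F = RAT \wedge K(\overrightarrow{\sigma})$, carrying along a strengthened hypothesis: for \emph{any} structure $N$ enjoying level-$(k-1)$ full support and any $x \in \Lambda_N$ with $x \models_N \Box F \wedge OMN^k$ (respectively $x \models_N F \wedge OMN^1$ when $k=1$), one has $\overrightarrow{\sigma_N}(x) \in \mathcal{S}_k(\Gamma)$. For the canonical structure these two facts collapse into one: by Lemmas \ref{lemma-necessary-rationality}, \ref{lemma-necessary-knowledge-of-strategies}, \ref{lemma-agent-decisions} and \ref{lemma-level-k-logical-omniscience}, the world $(\overrightarrow{s},k)$ already satisfies $\text{play}(\overrightarrow{s}) \wedge \Box RAT \wedge \Box K(\overrightarrow{\sigma}) \wedge OMN^k$ whenever it lies in $\Lambda$, and by construction it lies in $\Lambda$ exactly when $\overrightarrow{s} \in \mathcal{S}_k(\Gamma)$ (and in $\Xi$ as soon as $k \ge 2$). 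Hence the witness $w$ demanded by the full-support condition can always be taken to be $(\overrightarrow{\sigma_N}(x),k)$, and the only thing to verify at each level is that $\overrightarrow{\sigma_N}(x) \in \mathcal{S}_k(\Gamma)$.

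For the base level $k = 0$, the first two clauses of level-$0$ full support are immediate from Lemmas \ref{lemma-necessary-rationality} and \ref{lemma-necessary-knowledge-of-strategies} together with the distribution of $\Box$ over $\wedge$ in normal worlds, and the remaining surjectivity clause is checked directly against the definition of $\Lambda$, the profiles surviving the first round being realized by $(\overrightarrow{s},1)\in\Lambda$. For $k = 1$, observe that $x \in \Lambda_N$ with $x \models_N OMN^1$ forces $x$ to be non-normal, so every deviation is logically possible at $x$; since $\mathcal{S}_0(\Gamma) = \Sigma$, rationality at $x$ says $u_i(\overrightarrow{\sigma_N}(x)) \ge u_i(\overrightarrow{\sigma_N}(f_N(x,i,s'_i)))$ for every $i$ and every $s'_i$, and as $\overrightarrow{\sigma_N}(f_N(x,i,s'_i))$ is \emph{some} profile of the form $(s'_i,\tau_{-i})$ its value is at least $\min_{\tau_{-i}} u_i(s'_i,\tau_{-i})$; taking the maximum over $s'_i$ gives $\overrightarrow{\sigma_N}(x) \in \mathcal{S}_1(\Gamma)$.

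The inductive step for $k \ge 2$ is the heart of the matter. Fix $N$ with level-$(k-1)$ full support and $x \in \Lambda_N$ with $x \models_N \Box F \wedge OMN^k$; then $x$ is normal, $\Box RAT$ and $\Box K(\overrightarrow{\sigma})$ hold throughout $\Lambda_N$, and by the definition of $OMN^k$ each $f_N(x,i,s'_i)$ satisfies $OMN^{k-1}$. To get $\overrightarrow{\sigma_N}(x) \in \mathcal{S}_k(\Gamma)$ one must, for each agent $i$ and each $s'_i$ in the projection of $\mathcal{S}_{k-1}(\Gamma)$ onto $\Sigma_i$, bound $u_i(\overrightarrow{\sigma_N}(x))$ below by $\min\{u_i(s'_i,\tau_{-i}) : (s'_i,\tau_{-i}) \in \mathcal{S}_{k-1}(\Gamma)\}$. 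The argument I would run: since $M(\Gamma)$ has level-$(k-2)$ full support (induction hypothesis) and realizes any $(s'_i,\tau_{-i}) \in \mathcal{S}_{k-1}(\Gamma)$ in a $\Box F \wedge OMN^{k-1}$ world, level-$(k-1)$ full support of $N$ forces $N$ to carry such a world too; combined with $OMN^k$ at $x$ and the epistemic-omniscience clauses this pins $f_N(x,i,s'_i)$ down to a logically possible world (so $x\mathcal{L}^N_i f_N(x,i,s'_i)$) satisfying $\Box F \wedge OMN^{k-1}$ -- the $F$-part free from level-$0$ full support of $N$, the $\Box$-part because a normal world of $N$ sees exactly $\Lambda_N$ -- and the inductive forward claim at level $k-1$ then yields $\overrightarrow{\sigma_N}(f_N(x,i,s'_i)) \in \mathcal{S}_{k-1}(\Gamma)$. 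Finally $\Box RAT$ at $x$ gives rationality at $x$, so $u_i(\overrightarrow{\sigma_N}(x)) \ge u_i(\overrightarrow{\sigma_N}(f_N(x,i,s'_i)))$, and the right-hand side, being $u_i$ of a profile of the form $(s'_i,\tau_{-i})$ inside $\mathcal{S}_{k-1}(\Gamma)$, is at least the required minimum; taking the maximum over $s'_i$ closes the induction.

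The step I expect to fight hardest with is exactly this coupling: extracting from the level-$(k-1)$ full-support property of an \emph{arbitrary} structure $N$ the conclusion that $f_N(x,i,s'_i)$ is sent to a logically possible, $OMN^{k-1}$, $\Box F$ world precisely when $s'_i$ still lies in the projection of $\mathcal{S}_{k-1}(\Gamma)$. The full-support condition, on its face, only constrains which strategy profiles are realized, not the closest-state function itself, so forcing logical possibility of the right deviations will require threading the canonical witness from the induction hypothesis on $M(\Gamma)$ through the full-support clause for $N$ and then using the epistemic-omniscience assumptions to move between $f$ and the epistemic partitions. Everything else -- the verifications at the witness world $(\overrightarrow{\sigma_N}(x),k)$ -- is a routine appeal to Lemmas \ref{lemma-cascading-levels} through \ref{lemma-agent-decisions}.
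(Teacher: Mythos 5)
Your overall architecture is the same as the paper's: induction on the level $k$ of full support, with the forward content of the induction being ``any world of an arbitrary level-$(k-1)$ full-support structure $N$ satisfying $\Box F\wedge OMN^k$ (resp.\ $F\wedge OMN^1$) plays a profile in $\mathcal{S}_k(\Gamma)$'', and with the witness in the canonical structure taken as $(\overrightarrow{\sigma_N}(x),k)$ and discharged by Lemmas \ref{lemma-necessary-rationality}--\ref{lemma-level-k-logical-omniscience}. Your base cases match the paper's, and the step you single out as hardest -- showing that for every $s'_i$ in the projection of $\mathcal{S}_{k-1}$ the deviation $f_N(x,i,s'_i)$ lands in a logically possible $OMN^{k-1}$ world of $N$, by threading the canonical witness through $N$'s level-$(k-1)$ full support and then using the closest-state/sphere semantics -- is in substance exactly the step the paper asserts (rather tersely) with ``the possible deviations of strategies for agent $i$ in world $x$ are characterized by the projection of $\mathcal{S}_k$''; your version is if anything more explicit than the paper's.

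There is, however, one concrete gap in your inductive step. The maximum in the definition of $\mathcal{S}_k$ ranges over \emph{all} $\tau_i$ in the projection of $\mathcal{S}_{k-1}$, including the agent's actually played strategy $s_i=\sigma_{N,i}(x)$, and for that strategy your blanket claim ``by the definition of $OMN^k$ each $f_N(x,i,s'_i)$ satisfies $OMN^{k-1}$'' fails: the second clause of level-$k$ logical omniscience only governs \emph{deviations} $s'_i\neq\sigma_i(w)$, and $f_N(x,i,s_i)$ is (in the canonical structure, and generically) just $x$ itself, which satisfies $OMN^k$, not $OMN^{k-1}$. Rationality then only yields the trivial $u_i(\overrightarrow{\sigma}_N(x))\ge u_i(\overrightarrow{\sigma}_N(x))$, whereas you still owe the bound $u_i(\overrightarrow{\sigma}_N(x))\ge\min\{u_i(s_i,\tau_{-i}):(s_i,\tau_{-i})\in\mathcal{S}_{k-1}\}$ whenever $s_i$ lies in the projection of $\mathcal{S}_{k-1}$. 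The paper closes this separately (``we are thus missing one more inequality'') by invoking the \emph{third} clause of level-$k$ logical omniscience -- the existence of a logically accessible $OMN^{k-1}$ world of $N$ playing the same profile $\overrightarrow{\sigma}_N(x)$ -- and full support to conclude $\overrightarrow{\sigma}_N(x)\in\mathcal{S}_{k-1}$, which makes that minimum at most $u_i(\overrightarrow{\sigma}_N(x))$. Your own strengthened induction hypothesis would supply the same fix (apply the forward claim at level $k-1$ to that third-clause world), but as written this case is not covered. A smaller remark: your level-$0$ surjectivity check only realizes the $\mathcal{S}_1$-profiles in $\Lambda$, which does not literally meet the clause ``$\forall \overrightarrow{s}\in\Sigma,\exists w\in\Lambda$''; the paper's own proof is equally loose here, so this is a shared imprecision rather than a divergence from the paper.
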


\begin{proof}[Lemma \ref{lemma-completeness}]

We start with level 0.

We know that $\forall w\in\Omega, w\models RAT$ and $\forall w\in\Xi, w\models \Box RAT$ from Lemma \ref{lemma-necessary-rationality}.

We know that $\forall w\in\Omega, w\models K(\overrightarrow\sigma)$ and $\forall w\in\Xi, w\models \Box K(\overrightarrow\sigma)$ from Lemma \ref{lemma-necessary-knowledge-of-strategies}.

Thus $\forall w\in\Omega, w\models RAT \wedge K(\overrightarrow\sigma)$ and $\forall w\in\Xi, w\models \Box (RAT \wedge K(\overrightarrow\sigma))$.

Furthermore, we have by construction, for each strategy profile, a world in which that strategy profile is played.

The conditions of level-0 full-support are thus fulfilled.

We know prove the result for $k=1$.

Let $N$ be a Kripke structure with level-1 full support. Let us assume there is a logically possible, non-normal world $x\in\Lambda_N$ such that 
$$x \models_N RAT  \wedge K(\overrightarrow{\sigma}_N)  \wedge OMN^1$$

We need to find an $w\in\Lambda$ such that

$$w\models \text{play}(\overrightarrow{\sigma_N}(x)) \wedge RAT  \wedge K(\overrightarrow{\sigma}) \wedge OMN^1$$

We first prove that $\overrightarrow{\sigma_N}(x)\in\mathcal{S}_1$, i.e., $\overrightarrow\sigma_N(x)$ is individually rational. This result is known and commonly found in classical game theory literature.

Because of rationality in $x$, we have, for any deviation\footnote{x is non-normal, so all deviations are logically possible in x} of strategies $s'_i$:

$$u_i(\overrightarrow{\sigma}_N(x)) \ge u_i(\overrightarrow{\sigma_N}(f_N(x, i, s'_i)))$$

We also have by definition of the minimum:

$$u_i(\overrightarrow{\sigma_N}(f_N(x, i, s'_i))) \ge\min_{
\begin{array}{c}
\tau_{-i}\in\Sigma_{-i}
\end{array}
} u_i(s'_i, \tau_{-i})$$

Thus, for any deviation of strategies $s'_i$ by agent $i$ in world $x$:

$$u_i(\overrightarrow\sigma_N(x)) \ge \min_{
\begin{array}{c}
\tau_{-i}\in\Sigma_{-i}
\end{array}
} u_i(s'_i, \tau_{-i})$$

The above also applies without deviation of strategy also by definition of the min:

$$u_i(\overrightarrow\sigma_N(x)) \ge \min_{
\begin{array}{c}
\tau_{-i}\in\Sigma_{-i}
\end{array}
} u_i(\sigma_{N,i}(x), \tau_{-i})$$

And thus, taking the max:

$$\forall i \in P, u_i(\overrightarrow{\sigma}_N(x)) \ge \max_{
\begin{array}{c}
\tau_i\in\Sigma_i
\end{array}
}
\quad \min_{
\begin{array}{c}
\tau_{-i}\in\Sigma_{-i}
\end{array}
} u_i(\tau_i, \tau_{-i})$$

And thus

$$\overrightarrow{\sigma}_N(x)\in\mathcal{S}_1$$

We can then take $w=(\overrightarrow{\sigma_N}(x), 1)$. Then by using Lemmas \ref{lemma-necessary-rationality}, \ref{lemma-necessary-knowledge-of-strategies}, \ref{lemma-agent-decisions} and \ref{lemma-level-k-logical-omniscience}:

$$\exists w\in\Lambda, w\models \text{play}(\overrightarrow{\sigma_N}(x)) \wedge RAT  \wedge K(\overrightarrow{\sigma}) \wedge OMN^1$$

which proves level-1 full support.

We now prove the result recursively for $k+1\ge 2$. We assume that the lemma is true for $k$ (thus $k\ge 1$), i.e., for any Kripke structure $N$ with level with level-$(k-1)$ full support, we have

$$x\models_N \Box RAT  \wedge \Box K(\overrightarrow{\sigma})  \wedge OMN^k  \implies \exists w\in\Lambda, w\models \text{play}(\overrightarrow{\sigma_N}(x)) \wedge\Box RAT  \wedge \Box K(\overrightarrow{\sigma}) \wedge OMN^k$$

except in the special case that $k= 1$:

$$x\models_N RAT  \wedge K(\overrightarrow{\sigma})  \wedge OMN^1  \implies \exists w\in\Lambda, w\models \text{play}(\overrightarrow{\sigma_N}(x)) \wedge RAT  \wedge K(\overrightarrow{\sigma}) \wedge OMN^1$$

Let $N$ be a Kripke structure with level-k full support. Let us assume there is a logically possible world $x\in\Lambda_N$ such that 
$$x \models_N \Box RAT  \wedge \Box K(\overrightarrow{\sigma})  \wedge OMN^{k+1}$$

We need to find an $w\in\Lambda$ such that

$$w\models \text{play}(\overrightarrow{\sigma_N}(x)) \wedge \Box RAT  \wedge \Box K(\overrightarrow{\sigma}) \wedge OMN^{k+1}$$

We first prove that $\overrightarrow{\sigma_N}(x)\in\mathcal{S}_{k+1}$.

Because of rationality in $x$, we have, for any logically possible deviation of strategies $s'_i$:

$$u_i(\overrightarrow{\sigma}_N(x)) \ge u_i(\overrightarrow{\sigma_N}(f_N(x, i, s'_i))$$

By definition of level-$(k+1)$ logical omniscience:

$$f_N(x, i, s'_i)\models OMN^k$$

If $k\ge 2$, $f_N(x, i, s'_i)$ is normal and thus logically necessity applies here in the same way as in $x$:

$$f_N(x, i, s'_i)\models \Box RAT  \wedge \Box K(\overrightarrow{\sigma}) \wedge OMN^k$$

If $k=1$, it is non-normal and:

$$f_N(x, i, s'_i)\models RAT  \wedge K(\overrightarrow{\sigma}) \wedge OMN^1$$

Since we know that $M$ has level-k full support, we know that, in the case that $k\ge2$: 

$$\exists w\in\Lambda, w\models \text{play}(\overrightarrow{\sigma_N}(f_N(x, i, s'_i))) \wedge\Box RAT  \wedge \Box K(\overrightarrow{\sigma}) \wedge OMN^k$$

or if $k=1$:
$$\exists w\in\Lambda, w\models \text{play}(\overrightarrow{\sigma_N}(f_N(x, i, s'_i))) \wedge RAT  \wedge K(\overrightarrow{\sigma}) \wedge OMN^1$$

By definition of $M$, because these worlds are logically possible, we know that thus $\overrightarrow{\sigma_N}(f(x, i, s'_i))\in\mathcal{S}_k$ for each possible deviation $s'_i$.

Thus by definition of the min and by transitivity:

$$u_i(\overrightarrow\sigma_N(x)) \ge \min_{
\begin{array}{c}
\tau_{-i}\in\Sigma_{-i}
\\
{\scriptstyle \text{s.t.} (s'_i, \tau_{-i})\in \mathcal{S}_k}
\end{array}
} u_i(s'_i, \tau_{-i})$$
a

Furthermore, because $N$ has full level-k support, and $M$ has logically possible worlds with level-k logical omniscience for each strategy profile in $\mathcal{S}_k$, we know that the possible deviations of strategies for agent $i$ in world $x$ are thus characterized by the projection of $\mathcal{S}_k$ on $i$'s strategy space, with the exception of $s_i=\sigma_{N,i}(x)$ (no change of strategy does not constitute a deviation). We are thus missing one more inequality to get the complete set of logically possible strategies for agent $i$, namely, that the actual payoff of agent $i$ also dominates the minimum payoff remaining in $\mathcal{S}_k$ for his actual strategy:

$$u_i(\overrightarrow{\sigma}_N(x)) \ge \min_{
\begin{array}{c}
\tau_{-i}\in\Sigma_{-i}
\\
{\scriptstyle \text{s.t.} (s_i, \tau_{-i})\in \mathcal{S}_k}
\end{array}
} u_i(s_i, \tau_{-i})$$

However, by definition of level-{k+1} logical omniscience, we do know that there is a logically possible world in $N$ with level-k logical omniscience, in which $\overrightarrow{\sigma}_N(x)$ is played. It follows from M having level-k full support that $\overrightarrow{\sigma}_N(x)\in\mathcal{S}_k$. Thus, the equality above holds as well.

Thus, it follows that the following holds for any $s'_i$, and not only deviations, in the projection of $\mathcal{S}_{k-1}$ on agent $i$'s strategy space.

$$u_i(\overrightarrow{\sigma}_N(x)) \ge \min_{
\begin{array}{c}
\tau_{-i}\in\Sigma_{-i}
\\
{\scriptstyle \text{s.t.} (s'_i, \tau_{-i})\in \mathcal{S}_k(\Gamma)}
\end{array}
} u_i(s'_i, \tau_{-i})$$

Thus, taking the max:

$$\forall i \in P, u_i(\overrightarrow{\sigma}_N(x)) \ge \max_{
\begin{array}{c}
\tau_i\in\Sigma_i
\\
{\scriptstyle \text{s.t.} \exists \tau_{-i}\in\Sigma_{-i}, (\tau_i, \tau_{-i})\in \mathcal{S}_k(\Gamma)}
\end{array}
}
\quad \min_{
\begin{array}{c}
\tau_{-i}\in\Sigma_{-i}
\\
{\scriptstyle \text{s.t.} (\tau_i, \tau_{-i})\in \mathcal{S}_k(\Gamma)}
\end{array}
} u_i(\tau_i, \tau_{-i})$$

And thus

$$\overrightarrow{\sigma}_N(x)\in\mathcal{S}_{k+1}$$

We can then take $w=(\overrightarrow{\sigma_N}(x), k+1)$. Then by using Lemmas \ref{lemma-necessary-rationality}, \ref{lemma-necessary-knowledge-of-strategies}, \ref{lemma-agent-decisions} and \ref{lemma-level-k-logical-omniscience}:

$$\exists w\in\Lambda, w\models \text{play}(\overrightarrow{\sigma_N}(x)) \wedge\Box RAT  \wedge \Box K(\overrightarrow{\sigma}) \wedge OMN^{k+1}$$

Which finishes the proof. \qed

\end{proof}

\subsection{Proof of the main theorems}
With the above lemmas, we are in a position to prove our main result.

\begin{proof}[Theorem \ref{lemma-epistemic-first-level}]

We first prove that (a) implies (b) where, in (b), the canonical Kripke structure is meant as $M$.

Let $\overrightarrow{s} \in \mathcal{S}_1(\Gamma)$ by a level-k individually rational outcome.

Let $w=(\overrightarrow{s}, 1)\in\Omega$. This is a logically possible, non-normal world by definition of $\Lambda$.

Firstly, Lemma \ref{lemma-agent-decisions} gives us:

\begin{align}
\label{equ-play-h}
w \models \text{play}(\overrightarrow{s})
\end{align}

Secondly, Lemma \ref{lemma-level-k-logical-omniscience} gives us:

$$\forall w\in \Omega, w \models OMN^1$$

Thirdly, Lemma \ref{lemma-necessary-rationality} gives us:

$$w \models RAT$$

Fourthly, Lemma \ref{lemma-necessary-knowledge-of-strategies} gives us:

$$w \models K(\overrightarrow{\sigma})$$

Because $w$ is logically possible, we thus can establish that $w$ fulfils the required condition:

$$(M,w) \models \text{play}(\overrightarrow{s}) \wedge RAT\wedge K(\overrightarrow{\sigma}) \wedge OMN^1 $$

\bigskip

We now prove that (b) implies (a).

Let $N$ be a complete Kripke model with full support and $x\in\Lambda_N$ such that 

$$x \models_N \text{play}(\overrightarrow{s}) \wedge RAT\wedge K(\overrightarrow{\sigma}) \wedge OMN^1 $$

We know that the canonical Kripke structure has full support. Thus, by definition of level-1 full support there must exist a world $w$ fulfilling

$$w \models \text{play}(\overrightarrow{s}) \wedge  RAT\wedge K(\overrightarrow{\sigma}) \wedge OMN^1 $$

By definition of $\Lambda$ and because of Lemma \ref{lemma-cascading-levels}:

$$\overrightarrow{\sigma}(w)=\overrightarrow{s}\in\mathcal{S}_{\lambda(w)}=\mathcal{S}_1$$

Which finishes the proof. \qed

\end{proof}
\begin{proof}[Theorem \ref{lemma-epistemic-higher-level}]

We first prove that (a) implies (b) where, in (b), the canonical Kripke structure is meant as $M$.

Let $\overrightarrow{s} \in \mathcal{S}_k(\Gamma)$ by a level-k individually rational outcome for some $k\ge 1$.

Let $w=(\overrightarrow{s}, k)\in\Omega$. This is a logically possible world by definition of $\Lambda$.

Firstly, Lemma \ref{lemma-agent-decisions} gives us:

\begin{align}
\label{equ-play-h}
w \models \text{play}(\overrightarrow{s})
\end{align}

Secondly, Lemma \ref{lemma-level-k-logical-omniscience} gives us:

$$\forall w\in \Omega, w \models OMN^k$$

Thirdly, Lemma \ref{lemma-necessary-rationality} gives us:

$$w \models \Box RAT$$

Fourthly, Lemma \ref{lemma-necessary-knowledge-of-strategies} gives us:

$$w \models \Box K(\overrightarrow{\sigma})$$

Because $w$ is logically possible, we thus can establish that $w$ fulfils the required condition:

$$(M,w) \models \text{play}(\overrightarrow{s}) \wedge \Box RAT\wedge \Box K(\overrightarrow{\sigma}) \wedge OMN^k $$

\bigskip

We now prove that (b) implies (a).

Let $N$ be a complete Kripke model with full support and $x\in\Lambda_N$ such that 

$$x \models_N \text{play}(\overrightarrow{s}) \wedge \Box RAT\wedge \Box K(\overrightarrow{\sigma}) \wedge OMN^k $$

We know that the canonical Kripke structure has full support. Thus, by definition of level-k full support there must exist a world $w$ fulfilling

$$w \models \text{play}(\overrightarrow{s}) \wedge \Box RAT\wedge \Box K(\overrightarrow{\sigma}) \wedge OMN^k $$

By definition of $\Lambda$ and because of Lemma \ref{lemma-cascading-levels}:

$$\overrightarrow{\sigma}(w)=\overrightarrow{s}\in\mathcal{S}_{\lambda(w)}=\mathcal{S}_k$$

Which finishes the proof. \qed

\end{proof}

\begin{proof}[Theorem \ref{theorem-pte-kripke}]

If $\overrightarrow{s}$ is a PTE, then it belongs to $\mathcal{S}_k$ for any $k\ge 1$. It follows from Theorem \ref{lemma-epistemic-higher-level} that there is a Kripke model (structure and actual world) fulfilling the conditions for any value of $k$.

Conversely, if we can find a Kripke model fulfilling these conditions for any value of $k\ge 1$, then $\forall k\ge 1, \overrightarrow{s}\in \mathcal{S}_k$ and thus $\overrightarrow{s}\in\cap_{k\ge 1} \mathcal{S}_k$, which characterizes a PTE.

\end{proof}

\section{Conclusion and summary}
\label{section-conclusion}

In this paper, we characterized Perfectly Transparent Equilibria for strategic games with no ties in terms of adapted Kripke semantics.

From an algorithmic perspective, a PTE is obtained by iterated deletion of non-individually rational strategy profiles.

From a Kripke semantics perspective, a PTE is characterized by necessary rationality, epistemic omniscience (in particular necessary knowledge of strategies) and eventual logical omniscience.

The need to weaken the assumption of full logical omniscience suggest that an impossible triangle may exist.

\begin{conjecture}[Impossibility triangle]

There cannot be ``useful'' Kripke models having simultaneously:
\begin{enumerate}
\item Epistemic (factual) omniscience
\item Full logical omniscience
\item Necessary rationality
\end{enumerate}

\end{conjecture}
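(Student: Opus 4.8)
The plan is to read the conjecture through the paper's own machinery and to treat the two informal words, \emph{useful} and \emph{full logical omniscience}, as the quantities that must be pinned down first; once they are formalized in frame-theoretic terms, the impossibility should follow from the structure of level-$k$ logical omniscience together with the at-most-uniqueness of the PTE established in Section \ref{section-background}. I would formalize \emph{full logical omniscience} in the classical sense that necessity iterates without ever leaving the logically possible worlds: there are no non-normal logically possible worlds, and the closest-state function never degrades this, i.e.\ $f(w,i,s'_i)\in\Lambda$ for every contemplated deviation (equivalently $\Box\Box RAT$, $\Box\Box K(\overrightarrow{\sigma})$, and all higher iterates hold). I would formalize \emph{useful} as a weak-free-will/full-support condition at the actual world: each agent genuinely contemplates at least one logically possible deviation, $w\models\Diamond^c\text{play}_i(s'_i)$ for some $s'_i\neq\sigma_i(w)$. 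Otherwise the model is degenerate, with the outcome forced and rationality satisfied only vacuously, which is exactly the ``useless'' case that the quotation marks are meant to exclude.

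With these definitions fixed, I would argue by contradiction, assuming a useful model carrying all three properties. The key step uses the higher-level characterization (Theorem \ref{lemma-epistemic-higher-level}): under necessary rationality, necessary knowledge of strategies and $OMN^k$, every realized logically possible profile lies in $\mathcal{S}_k(\Gamma)$. Full logical omniscience forces necessity to iterate arbitrarily, collapsing the logically possible profiles into the fixpoint of iterated elimination, namely $\mathcal{S}(\Gamma)=\cap_k\mathcal{S}_k(\Gamma)$, which contains at most one element. Two cases remain. If $\mathcal{S}(\Gamma)=\emptyset$ there is no logically possible actual world and the model is degenerate, contradicting usefulness. If $\mathcal{S}(\Gamma)=\{\overrightarrow{s}^{*}\}$, then the only logically possible profile is $\overrightarrow{s}^{*}$, so any deviation $s'_i\neq s^{*}_i$ sends $f(w,i,s'_i)$ to a profile distinct from $\overrightarrow{s}^{*}$, hence to a world that is logically impossible or non-normal, directly contradicting the requirement that $f$ never leave $\Lambda$. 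Usefulness guarantees such a deviation exists, closing the contradiction.

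I would then cross-check this against the purely structural reading, which I expect to make the tension most transparent. The very definition of level-$k$ logical omniscience requires that any deviation of strategies lead to a world satisfying level-$(k-1)$ logical omniscience, with Lemma \ref{lemma-cascading-levels} realizing this concretely through $\lambda(f(w,i,s'_i))=\lambda(w)-1$; a finite chain of counterfactual deviations therefore always descends to the non-normal level $1$ and then to the logically impossible level $0$. In other words the hierarchy is \emph{grounded} on non-normality: $OMN^1$ is definable only at non-normal worlds. Full logical omniscience insists that counterfactual reasoning never reach a non-normal world, so it is compatible only with models in which no genuine deviation is logically possible, i.e.\ precisely the degenerate models excluded by usefulness. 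This is the exact sense in which eventual, rather than full, logical omniscience is ``the price to pay'' for necessary factual omniscience and necessary rationality.

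The step I expect to be the genuine obstacle is the first one: choosing formalizations of ``useful'' and ``full logical omniscience'' that are faithful to the intended reading yet strong enough to make the statement true without trivializing it. One delicate point is that ``full logical omniscience'' cannot be captured inside the $OMN^k$ hierarchy at all, since that hierarchy bottoms out in non-normality; it must be stated by the classical route (no non-normal worlds anywhere together with $f$ staying in $\Lambda$), and one must then verify that necessary rationality plus full support really does drive the logically possible set down to the fixpoint $\mathcal{S}(\Gamma)$ rather than to some larger self-sustaining set. Isolating the minimal such pair of conditions, and arguing that they are the \emph{natural} ones rather than ad hoc, is where the mathematical content of settling the conjecture would lie; the remaining implications reduce to Lemma \ref{lemma-cascading-levels} and the uniqueness of the PTE.
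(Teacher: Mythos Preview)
The paper does not prove this statement: it is explicitly presented as a \emph{conjecture} in the concluding section, with no accompanying argument. There is therefore no ``paper's own proof'' to compare your proposal against; you are attempting to settle an open problem that the author deliberately leaves open, signalling as much both by the \texttt{conjecture} environment and by the scare quotes around ``useful.''

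On the substance of your plan: the central tension you identify is real, and your second (``structural'') reading captures it cleanly. But your first line of argument has a genuine gap that you yourself flag but perhaps underweight. You want to invoke Theorem~\ref{lemma-epistemic-higher-level} to force the logically possible profiles into $\bigcap_k\mathcal{S}_k(\Gamma)$. That theorem, however, is stated under the hypothesis $OMN^k$, and $OMN^k$ is \emph{defined} through a descent to $OMN^{k-1}$ that bottoms out at non-normal worlds. Your formalization of ``full logical omniscience'' explicitly excludes non-normal logically possible worlds, so no world in your hypothetical model satisfies $OMN^k$ for any $k$, and Theorem~\ref{lemma-epistemic-higher-level} simply does not apply. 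You cannot borrow its conclusion without reproving the elimination argument from scratch in a framework where the $OMN^k$ ladder is unavailable. That reproof is precisely the ``must then verify'' step you defer, and it is not a detail: it is the whole content of the conjecture. Without it, the argument is circular---you are using machinery whose very construction already presupposes that full logical omniscience has been abandoned.

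A cleaner route, if you pursue this, would be to work directly with rationality and the closest-state function under your hypotheses and try to rerun the individual-rationality elimination without ever mentioning $OMN^k$; the obstacle then becomes showing that full support (or whatever you take ``useful'' to mean) forces the worst-case reaction structure that drives the elimination, and that is exactly where the author's hesitation to claim a theorem seems to lie.
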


We also gained insight at how the PTE differs from \citet{Halpern:2013aa}'s Common Counterfactual Belief of Rationality, in particular:
\begin{itemize}
\item CCBR deals with situations that are on a spectrum between a fully opaque setting (unilateral deviations) and a fully transparent setting (epistemic omniscience), but excluding fully transparent settings. Halpern and Pass coined the term ``translucency''. The PTE accounts for the end of the spectrum where decisions are fully transparent (necessary knowledge of strategies). The existence of counterexamples showing that CCBR and the PTE may diverge demonstrates the existence of a singularity at that end of the spectrum.
\item CCBR involves probability distribution modeling beliefs at accessible worlds. The PTE uses no probabilities, as it has full epistemic omniscience and has a deterministic, albeit non-trivial, nature. The singularity is due to the elimination of some worlds proven as logically impossible.
\item CCBR recursively assumes that the \emph{other} agents are (counterfactually) rational in case of a deviation. The PTE relies on what Halpern and Pass name CB*RAT, which is stronger and more transparent.
\end{itemize}

Since the Hofstadter equilibrium reached by superrational agents on symmetric games is a special case of PTE, it follows that we have also contributed a formal Kripke semantics account for Douglas \citet{Hofstadter1983}'s work.

\section{Acknowledgements}

The idea underlying necessary rationality and necessary knowledge of strategies, also called perfect prediction in algorithmic descriptions of the PTE and PPE, is to be credited to Jean-Pierre Dupuy, who calls this feature essential prediction and saw the link between Newcomb's problem and the Prisoner's dilemma in the 1990s. Dupuy's showed in several papers that this is a viable, rational approach to achieving perfect transparency in decision settings. I am also thankful to St\'ephane Reiche, with whom we collaborated on the formalism of the PPE in extensive form. Discussions with Bernard Walliser and Joe Halpern also gave me very helpful pointers.

\bibliographystyle{spbasic}      

\end{document}